\definecolor{orange}{cmyk}{0,0.5,1,0}
\definecolor{green}{cmyk}{1,0.4,.8,0}
\definecolor{blue}{rgb}{0.2,0.3,0.8}
\definecolor{red}{rgb}{0.8,0.1,0.1}
\newtheorem{lem}{Lemma}
\newtheorem{cor}{Corollary}
\newtheorem{thm}{Theorem}
\newcommand{\subparagraph}{}
\titlespacing{\section}{0pt}{*0}{*0}
\titlespacing{\subsection}{0pt}{*0}{*0}
\titlespacing{\subsubsection}{0pt}{*0}{*0}
\title{On Bandwidth Constrained Distributed Detection of a Deterministic Signal in Correlated Noise \thanks{This work is supported by the National Science Foundation under grants CCF-1341966 and CCF-1319770.}}
\author{\normalsize Nahal Maleki, ~\IEEEmembership{Member,~IEEE,}  Azadeh Vosoughi,~\IEEEmembership{Senior Member,~IEEE} 
}
\begin{document}
	
\maketitle
\vspace{-8cm}

\begin{abstract} 
We consider a Neyman-Pearson (NP) distributed binary detection problem in a bandwidth constrained wireless sensor network, where the fusion center (FC) is responsible for fusing signals received from sensors and making a final decision about the presence or absence of a signal source in correlated Gaussian noises. Given this signal model, our goals are (i) to investigate whether or not randomized transmission can improve detection performance, under communication rate constraint, and (ii) to explore 
how the correlation among observation noises would impact performance. To achieve these goals, we propose two novel schemes that combine the concepts of censoring and randomized transmission (which we name CRT-I and CRT-II schemes) and compare them with pure censoring scheme. In CRT (pure censoring) schemes we map randomly (deterministically) a sensor's observation to a ternary transmit symbol $u_k \in \{-1,0,1\}$ where ``$0$'' corresponds to no transmission (sensor censors).
Assuming sensors transmit $u_k$'s over orthogonal fading channels, we formulate and address two system-level constrained optimization problems: in the first problem we minimize the probability of miss detection at the FC, subject to constraints on the probabilities of transmission and false alarm at the FC; in the second (dual) problem we minimize the probability of transmission, subject to constraints on the probabilities of miss detection and false alarm at the FC. Based on the expressions of the objective functions and the constraints in each problem, we propose different optimization techniques  to address these two problems. 
Through analysis and simulations, we explore and provide the conditions (in terms of communication channel signal-to-noise ratio, degree of correlation among sensor observation noises, and maximum allowed false alarm probability) under which CRT schemes outperform pure censoring scheme.
%
%
\end{abstract}
\section{Introduction}
One of the important  wireless sensor network (WSN) applications is distributed binary detection, where  battery-powered wireless sensors are deployed over a sensing field to detect the presence or
absence of a target. 
Classical distributed detection \cite{c7,c8, c9, c10} is a powerful theoretical framework that enables system-level designers to formulate and address various problems pertaining to WSNs used for distributed detection. 
%
%
%
%
Motivated by the key observation that, when detecting a rare event 
transmitting many ``0'' decisions or low informative observation is wasteful in terms of communication cost, \cite{c11} introduced the idea of censoring, where sensors censor their ``uninformative'' observations and only transmit their ``informative'' observations. \cite{c11} showed that for conditionally independent sensor observations 
and under a communication rate constraint, a sensor should transmit  its (quantized) local likelihood  ratio (LLR) to FC only if it lies outside a certain single interval (so-called ``no-send'' interval). 
%
%
%
%
Leveraging on the results in \cite{c11}, the authors in \cite{c12} considered the extreme quantization case where a sensor transmits only one bit (sends ``1'') when its LLR exceeds a given threshold and remains silent when its LLR is below that threshold. Such a censoring scheme is effectively an on-off keying (OOK) signaling. With this OOK signaling, \cite{c12} incorporated the effects of wireless fading channels via developing (sub-)optimal fusion rules at the FC.
%
%
%
%
%
Rather than partitioning the LLR domain into two disjoint ``no-send'' and ``send'' intervals and using OOK signaling for wireless transmission as in\cite{c12}, the authors in \cite{c13, c14} proposed another censoring scheme, in which ``send'' interval is further divided into two intervals to increase the amount of information transmitted to the FC.  
%
%
%
%
Censoring sensors has also been investigated for spectrum sensing in cognitive radios \cite{c15, c16, c17}, albeit for conditionally independent observations.

On the other hand, the concept of ``randomized quantizer'' for an NP distributed detection problem was first introduced in \cite{c8, c9}, for conditionally independent observations. Unlike ``deterministic quantizer'' $\gamma$ (which maps a sensor's LLR to a discrete value according to a single local decision rule), a ``randomized quantizer'' chooses at random one local decision rule from a set of rules 
(with probability  $\mu_n$) 
for mapping a sensor's LLR to discrete values. 
Note that the quantizers in  \cite{c7,c10, c11, c12,c18, c19, c13,c14,c15, c16, c17} fall into the category of deterministic quantizers. 
The idea of  combining ``censoring'' and ``randomization'' was first introduced in \cite{c20,c21}.  The authors in \cite{c20} formulated the problem of finding optimal local decision rules  from Bayesian and NP viewpoints for conditionally independent observations, under communication rate constraint, and showed that likelihood-ratio-based local detectors are optimal.  The results in \cite{c20}  indicate that the effectiveness of independent randomization in choosing local decision rule, in terms of improving detection performance,  depends on whether or not sensors quantize their observations before transmission. 
%
%
%
%
 \cite{c22} provided a new framework for distributed detection with conditionally dependent observations (albeit without communication rate constraint and randomization in choosing local decision rule) that builds on a hierarchical conditional independent model and enabled the authors to formulate and address the problem of finding optimal local decision rules from Bayesian viewpoint. 
%
%
%

{\bf Our Contributions}: We consider an NP distributed binary detection problem where the FC is tasked with detecting a known signal in {\it correlated} Gaussian noises, using received signals from $K$ sensors. Our signal model is different from \cite{c11, c12,c18, c19, c13,c14,c15, c16, c17, c20,c21}, that considered conditionally  independent observations, i.e., in our setup sensors' observations conditioned on each hypothesis are {\it dependent}. 
With this signal model, our goal is to investigate whether or not {\it randomized transmission} can improve detection performance, under communication rate constraint.
To achieve this goal, we propose two novel schemes that combine the concepts of {\it censoring} and {\it randomized transmission} (which we call CRT-I and CRT-II schemes) and compare them with {\it pure censoring} scheme.
%
%
%
%
Assuming sensors transmit their non-zero symbols over orthogonal wireless fading channels,
let $P_M$ and $P_F$ be the probabilities of miss detection and false alarm at the FC, respectively, corresponding to the optimal likelihood-ratio test (LRT) fusion rule, and $P_t$ be the probability of transmission\footnote{We adopt this definition from \cite{c11,c20}, which have used this probability to measure communication rate, in the context of censoring sensors. 
%
%
%
For homogeneous sensors with identical observation distributions, a constraint on $P_t$ is equivalent to the communication rate constraint in \cite{c11,c20}.} under the null hypothesis only (i.e., signal is absent). 
We formulate two system-level constrained optimization problems, problem $(\mathcal{O})$ and  its dual problem $(\mathcal{S})$, for each scheme.  In problem $(\mathcal{O})$, we minimize $P_M$ subject to constraints on $P_t$ and $P_F$. In problem $(\mathcal{S})$, we minimize $P_t$ subject to constraints on $P_M$ and $P_F$.
For CRT schemes, 
we provide new optimization techniques to find the optimal randomization parameters $g,f$ as well as the FC threshold. 
To address problems $(\mathcal{O})$ and $(\mathcal{S})$ for CRT-I scheme, we first decompose each problem into two sub-problems and use some approximation 
to convert $P_M,P_F$ expressions 
into polynomial functions of $g,f$, and then solve a set of Karush-Kuhn-Tucker (KKT) conditions to find sub-optimal solutions. 
%
%
Different from problem $(\mathcal{O})$, however, in problem $(\mathcal{S})$ one of the sub-problems cannot be turned into a convex problem and hence we find a geometric programming approximation of that sub-problem and obtain sub-optimal solutions  to problem $(\mathcal{S})$.  
Similarly, we address problems $(\mathcal{O})$ and $(\mathcal{S})$ for CRT-II scheme, with the difference that $P_M,P_F$ expressions are polynomial functions of  $g,f$.
Based on our analytical solutions 
we provide the conditions 
under which our proposed CRT schemes outperform pure censoring scheme and explore numerically the deteriorating effects of incorrect correlation information (correlation mismatch) on the detection performance.
While independent randomization strategy cannot improve detection performance when sensors are restricted to transmit discrete values, for conditionally independent observations
\cite{c21}, our results show that this conclusion changes for conditionally dependent observations, and one can improve detection performance using our simple and easy-to-implement CRT schemes.

Our problem formulation and setup are different  from the related literature in the following aspects. Different from \cite{c20,c21,c22} that find
the forms of the optimal local decision rules, we fix the form and focus on finding the optimal randomization parameters, to show that randomized transmission can improve detection performance, when sensors' observations are conditionally dependent. Also, the bandwidth constrained communication channels between sensors and FC in  \cite{c20,c21,c22} are modeled error-free, whereas we consider wireless fading channels. 
Although \cite{c13} maps a sensor's observation into a ternary transmit symbol $u_k \in \{-1,0,1\}$, there is no randomized transmission, the communication channels are modeled as unfaded Gaussian channels, and most importantly, sensors' observations are conditionally independent.
%
%
\begin{figure}[!htbp]
\centering
\includegraphics[scale=0.28]{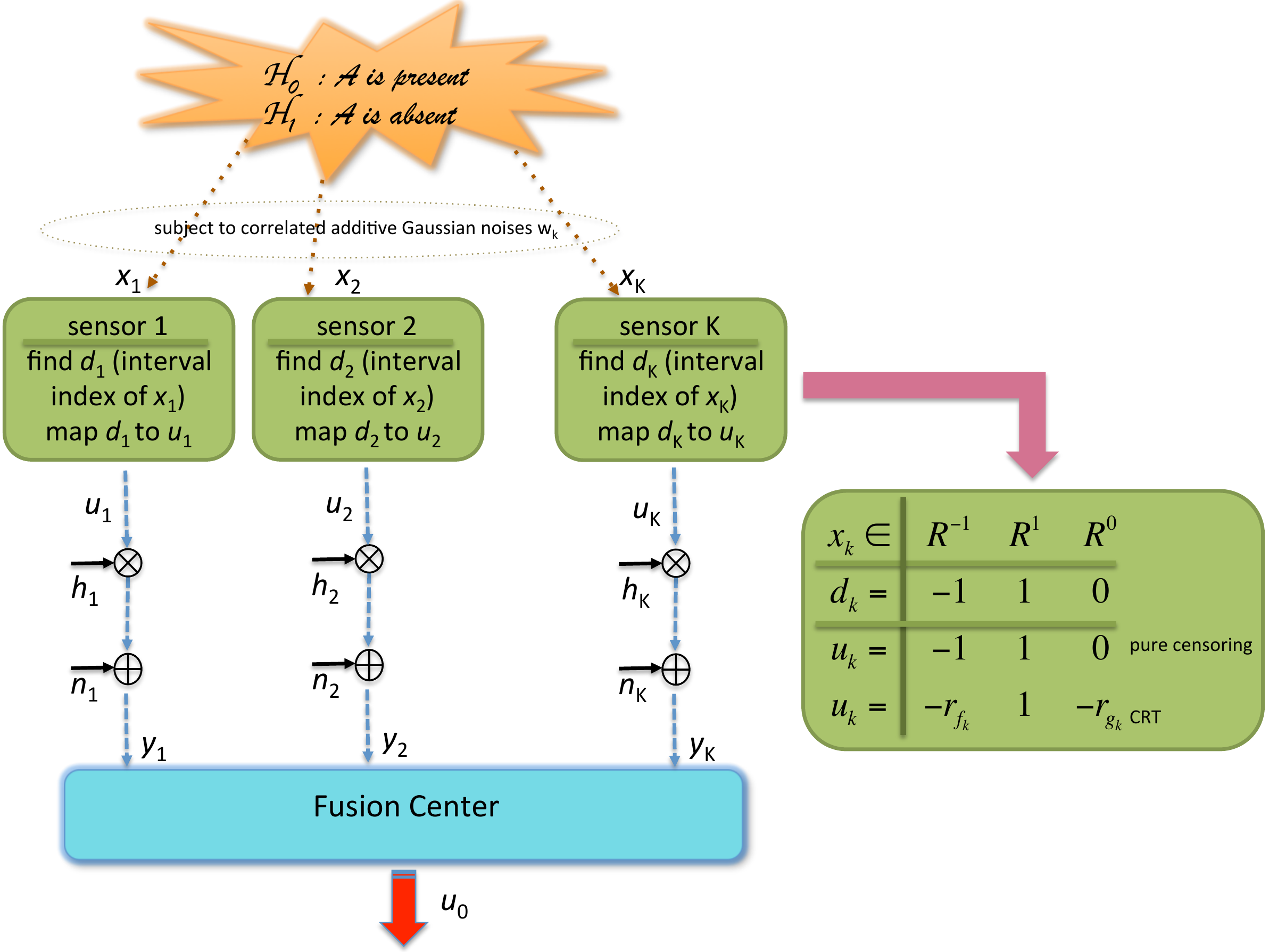}
\label{our-system-model-fig}
\caption{Our system model consists of $K$ sensors and a FC. Each sensor maps its real-valued observation $x_k$ into a ternary tranmsit symbol $u_k\in \{-1,0,1\}$.  Non-zero $u_k$'s are sent over orthogonal fading chanenls. The FC decides on the underlying binary hypothesis via fusing $y_k$'s using LRT in (\ref{u_0_opt}).}
\vspace{-0.6cm}
\end{figure}
\section{System Model and Problem Formulation}\label{System_model}
We consider the binary hypothesis testing problem of detecting a known signal ${\cal A}$ in correlated Gaussian noises, based on observations of $K$ distributed homogeneous sensors.
The FC is tasked with determining whether the unknown hypothesis is $\mathcal{H}_{1}$ or $\mathcal{H}_{0}$ (i.e., whether the signal is present or not), via fusing the received signals from $K$ sensors. Let $x_k$ denote observation of sensor ${\cal S}_k$. Our signal model is
\begin{equation}\label{original-binary-hypothesis}
\mathcal{H}_{0} :~~x_{k} = w_{k},~~
\mathcal{H}_{1} :~~x_{k} = {\cal A}+ w_{k},~~\mbox{for}~~k=1,\ldots,K.
\end{equation}
We assume noises $w_k$'s are {\it dependent} and identically distributed Gaussian random variables, that is $w_k \sim {\cal N}(0, \sigma_w^2)$ with covariance $\rho\sigma_w^2$, where $\rho$ is the correlation coefficient \cite{c23}.
Suppose ${\cal S}_k$ partitions its observation space\footnote{The choice of partitioning the observation space at each sensor into three disjoint intervals resembles the one in \cite{c13}, with the difference that in general $\tau_2  \neq - \tau_1$. Our choice is motivated by the result in \cite{c22} which states that, for conditionally dependent observations, if each sensor is restricted to map its observation to one of three discrete values, there exists one two-threshold quantizer at each sensor that minimizes the error probability.}
%
 into three intervals  $\mathcal{R}^{-1}=(-\infty,\tau_2)$, $\mathcal{R}^0=[\tau_2,\tau_1]$, $\mathcal{R}^1=(\tau_1,\infty)$. Upon making an observation, ${\cal S}_k$ finds the interval index $d_k$ corresponding to $x_k$, where $d_k \in \{-1,0,1\}$.
Next, ${\cal S}_k$  maps the interval index $d_k$ to a ternary transmit symbol $u_k \in \{-1,0,1\}$, where $u_k=-1,1,0$ correspond to ${\cal S}_k$ sends $-1$, ${\cal S}_k$ sends $1$, and ${\cal S}_k$ does not send and remains silent (${\cal S}_k$ censors), respectively.
Symbols $u_k$'s are transmitted over orthogonal wireless fading channels to the FC, subject to additive white Gaussian noise (AWGN).
The received signal at the FC from ${\cal S}_k$ is (see Fig. 1)
\begin{eqnarray}\label{y-received-at-FC}
&y_k= u_k h_k + v_k~\\
&\mbox{where}~~h_k \sim {\cal C}{\cal N}(0, \sigma_{h}^2),~~v_k \sim {\cal C}{\cal N}(0, \sigma_{v}^2),~~\mbox{for}~~k=1,\ldots,K,\nonumber
\end{eqnarray}
and $h_k$ represents the fading channel coefficient corresponding to the channel between  ${\cal S}_k$ and the FC, and $v_k$ denotes the AWGN. For coherent reception at the FC, the optimal fusion rule is likelihood-ratio test (LRT) as the following
\begin{eqnarray}\label{u_0_opt}
\gamma_0(y_1,y_2,\ldots,y_K)=u_0=\begin{cases}
1~~\mbox{if }\frac{f(y_1,y_2,\ldots,y_K|\mathcal{H}_1)}{f(y_1,y_2,\ldots,y_K|\mathcal{H}_0)}>t\\
0~~\mbox{if }\frac{f(y_1,y_2,\ldots,y_K|\mathcal{H}_1)}{f(y_1,y_2,\ldots,y_K|\mathcal{H}_0)}\leq t
\end{cases}
\end{eqnarray}
where $u_0=1$ ($u_0=0$) indicates the FC decides the signal is (not) present, $f(y_1,y_2,\ldots,y_K|\mathcal{H}_m)$ denotes the joint probability density function (pdf) of the received signals at the FC under hypothesis $\mathcal{H}_m$, $m=0,1$ and $t$ is the FC threshold.
%
To explore the effectiveness of randomized transmission, we propose two novel schemes that combine the concepts of censoring and randomized transmission (CRT-I and CRT-II schemes) and compare their performance (in terms of the reliability of the final decision $u_0$ at the FC and transmission rate) against that of {pure censoring} scheme.
Specifically,  in pure censoring scheme we let $u_k=d_k$ at ${\cal S}_k$. In the two proposed CRT schemes,  we introduce randomization when mapping interval index $d_k$ to transmit symbol $u_k$ at ${\cal S}_k$ as the following
\begin{eqnarray}\label{u_k_opt}
u_k=\begin{cases}
d_k~~~\mbox{if }x_k \in {\cal R}^{1}\\
r_{g_k}(d_k-1)~~~\mbox{if }x_k \in {\cal R}^{0}\\
r_{f_k}d_k~~~\mbox{if }x_k \in {\cal R}^{-1}
\end{cases}
\end{eqnarray}
in which $r_{g_k}, r_{f_k} \in \{0,1\}$ for $k=1,...,K$ are different realizations of two independent Bernoulli random variables with parameters $ 0 \leq g,f \leq 1$ ($g,f$ will be optimized). The two CRT schemes are different in the following way:  in CRT-I scheme sensors know $g, f$. Sensor ${\cal S}_k$ (independent of other sensors)
generates $r_{g_k}, r_{f_k}$ and uses these values to map $d_k$ to $u_k$, however, the FC is unaware of the values $\{r_{g_k}, r_{f_k}\}_{k=1}^K$  employed at the sensors. In CRT-II sensors do not know $g,f$. The FC generates  $\{r_{g_k}, r_{f_k}\}_{k=1}^K$ for all sensors, independent of each other,  
and informs each sensor of the values
%
%
 that should be employed for mapping $d_k$ to $u_k$.
%
%
%
%
We define probability of censoring (i.e., no transmission) as $P_c=P(u_k=0|\mathcal{H}_0)$ and probability of transmission as $P_t=P(u_k=1|\mathcal{H}_{0})+P(u_k=-1|\mathcal{H}_{0})$.
For pure censoring scheme $P_c=P(x_k \in {\cal R}^{0}|\mathcal{H}_0)$ and $P_t=P(x_k \in {\cal R}^{1}|\mathcal{H}_{0})+ P(x_k \in {\cal R}^{-1}|\mathcal{H}_{0})=1- P(x_k \in {\cal R}^{0}|\mathcal{H}_{0})$.
For CRT-I and CRT-II 
\begin{eqnarray}
&&P_c
=(1-g)P(x_k \in {\cal R}^{0}|\mathcal{H}_0)+(1-f)P(x_k \in {\cal R}^{-1}|\mathcal{H}_0), \label{definition-P-t} \\
&&P_t
= P(x_k \in {\cal R}^{1}|\mathcal{H}_0)+ g P(x_k \in {\cal R}^{0}|\mathcal{H}_0)+ f P(x_k \in {\cal R}^{-1}|\mathcal{H}_0).\nonumber
\end{eqnarray}
Note that for all three schemes (pure censoring,  CRT-I and CRT-II) we have $P_c+P_t=1$.
Let $P_M=P(u_0=0|\mathcal{H}_{1})$ and $P_F=P(u_0=1|\mathcal{H}_{0})$ denote the probabilities of miss detection and false alarm at the FC, respectively. 
We consider two system-level constrained optimization problems for each scheme.  In the first problem $(\mathcal{O})$, we minimize $P_M$ subject to constraints on $P_t$ and $P_F$. In the second problem $(\mathcal{S})$, we minimize $P_t$ subject to constraints on $P_M$ and $P_F$, i.e.,
\begin{eqnarray}\label{min_det}
\begin{array}{cc}
 \begin{array}{cc}
&\min~~P_{M}~~(\mathcal{O})\\
&\text{s.t.}~~
P_t =p_0,~~
P_F\leq \beta
    \end{array}
 &,\begin{array}{cc}
&\min~~P_t~~(\mathcal{S})\\
&\text{s.t.}~~
P_M\leq \alpha,~~
P_F\leq \beta
   \end{array}
\end{array}
\end{eqnarray}
where $\alpha, \beta, p_0$ are the largest tolerable $P_M, P_F$ and the maximum $P_t$, respectively.
For pure censoring scheme, we let the optimization variables be the local thresholds $\tau_1,\tau_2$ and the FC threshold $t$. 
For CRT-I and CRT-II,  we let the optimization variables be the randomization  parameters $g,f$ and the FC threshold $t$ (assuming sensors use the same local thresholds $\tau_1,\tau_2$ as for pure censoring scheme). 
Section \ref{derive PF_PM} derives $P_M,P_F$ expressions for the three schemes. Sections \ref{solve-O-CRT-I} and \ref{solve-O-CRT-II}  address problem $(\mathcal{O})$ in (\ref{min_det}) for CRT-I and CRT-II schemes, respectively. Sections \ref{solve-S-random-i} and \ref{solve-S-random-ii}  address problem $(\mathcal{S})$ in (\ref{min_det}) for CRT-I and CRT-II schemes, respectively.
The solutions to these problems provide us insights on the effectiveness of CRT schemes (with respect to pure censoring), when sensors' observations are conditionally dependent.





\section{Deriving $P_M$ and $P_F$ Expressions}\label{derive PF_PM}

For pure censoring scheme $P_M, P_F$  depend on $\tau_1,\tau_2, t$ and for CRT schemes they depend on $\tau_1,\tau_2, g, f, t$. In the following, we  derive $P_M, P_F$ for CRT-I and CRT-II schemes.
%
%
When we let $g=0, f=1$ into $P_M, P_F$ expressions of either CRT schemes, we reach $P_M,P_F$ expressions of pure censoring scheme.
%

\subsection{CRT-I Scheme}\label{derive PF_PM_opt_i}
To characterize $P_M, P_F$ we need the following definitions. For each sensor $\mathcal{S}_k$ we define row vector $\boldsymbol{c}_k=[i, u_k]$, where index $i$ indicates the interval which $x_k$ belongs to, i.e., $x_k \in \mathcal{R}^{i}$ for $i \in \{-1,0,1\}$, and $u_k$ is the transmitted symbol, i.e., $u_k \in \{-1,0,1\}.$  We define $K \times 2$ matrix $\boldsymbol{C}=[\boldsymbol{c}_1;\ldots;\boldsymbol{c}_K]$, whose rows are vectors $\boldsymbol{c}_k, ~k=1,...,K$. For CRT-I scheme and the above definitions, we recognize the non-empty set of sensors' indices fall into 5 categories
${\cal K}_1=\{k|\boldsymbol{c}_k=[1, 1]=\boldsymbol{c}^1\},~{\cal K}_2=\{k|\boldsymbol{c}_k=[0, 0]=\boldsymbol{c}^2\},~{\cal K}_3=\{k|\boldsymbol{c}_k=[0, -1]=\boldsymbol{c}^3\},
~{\cal K}_4=\{k|\boldsymbol{c}_k=[-1, 0]=\boldsymbol{c}^4\},~{\cal K}_5=\{k|\boldsymbol{c}_k=[-1, -1]=\boldsymbol{c}^5\}$.
%
%
We define row vector $\boldsymbol{a}=[a_1,\ldots,a_5]$, where its entries are $a_l=|{\cal K}_l|$, $a_l \in \{0,\ldots,K\}$ and satisfy $\sum_{l=1}^5 a_l=K$. We define $K \times 2$ matrix $\boldsymbol{C}^{a}$ such that the first $a_1$ rows are $\boldsymbol{c}^1$, the next $a_2$ rows are $\boldsymbol{c}^2$,  the next $a_3$ rows are $\boldsymbol{c}^3$, the next $a_4$ rows are $\boldsymbol{c}^4$, and the last $a_5$ rows are $\boldsymbol{c}^5$. Let $\boldsymbol{C}^{a}(k,1)$ and $\boldsymbol{C}^{a}(k,2)$ denote $(k,1)$-th and $(k,2)$-th entries of matrix $\boldsymbol{C}^{a}$, respectively.
Also, we define the following probabilities
\begin{eqnarray}
&&P_u(\tau_1,\tau_2, g, f, t, \boldsymbol{a})\!\!=\!\!P(u_0=1|\tau_1,\tau_2, g, f, t, \boldsymbol{C}=\boldsymbol{C}^{a}), \label{definition P_x_opt_i} \\
&&P_{xm}(\tau_1,\tau_2,\boldsymbol{a}) \!\!=\!\! P(x_k \in \mathcal{R}^{\boldsymbol{C}^{a}(k,1)}~\forall k|\tau_1,\tau_2,{\mathcal H}_m)~m=0,1\nonumber
\end{eqnarray}
Since noises $w_k$'s are correlated $P_{xm}$ cannot be decoupled across sensors 
and depends on the correlation coefficient $\rho$.
Noting that sensors are homogeneous, 
using the definitions in (\ref{definition P_x_opt_i}) and
the fact that, given the intervals to which $x_k$ belongs to, symbols $u_k$'s are conditionally independent,
we can express $P_M$ and $P_F$ in terms of $\tau_1,\tau_2, g, f, t$ as in (\ref{definition_P_ND_opt_i}).
\begin{figure*}[!htbp]
\begin{eqnarray}\label{definition_P_ND_opt_i} 
&&P_M= \sum_{a_2,...,a_5} \underbrace{\frac{K!(1-P_u(\tau_1,\tau_2, g, f, t, \boldsymbol{a})) P_{x1}(\tau_1,\tau_2,\boldsymbol{a})}{(K-a_2-a_3-a_4-a_5)!a_2!...,a_5!}
}_{=c_{a_2,a_3,a_4,a_5}} (1-g)^{a_2}  g^{a_3}  (1-f)^{a_4} f^{a_5} \nonumber\\
&&P_{F}=\sum_{a_2,...,a_5} \underbrace{\frac{K!P_u(\tau_1,\tau_2, g, f, t,\boldsymbol{a})P_{x0}(\tau_1,\tau_2,\boldsymbol{a})}{(K-a_2-a_3-a_4-a_5)!a_2!...,a_5!}}_{=d_{a_2,a_3,a_4,a_5}}  (1-g)^{a_2}  g^{a_3} (1-f)^{a_4} f^{a_5}\\
&&1-P_u(\tau_1,\tau_2, g, f, t,\boldsymbol{a})=\int_{\{y_k,h_k\}_{k=1}^K} \mathbbm{1}_{\{0\}}(u_0|\{y_k,h_k\}_{k},\tau_1,\tau_2, g, f, t)\prod_{k=1}^K f(y_k|u_k=\boldsymbol{C}^{a}(k,2),h_k)f(h_k) dh_1...dh_K dy_1...dy_K \label{Int P_u_opt_i}
\end{eqnarray}
\end{figure*}

In the following, we focus on $P_u(\tau_1,\tau_2, g, f, t,\boldsymbol{a})$ in $P_M, P_F$ 
and show that this probability is a non-polynomial function of $g,f$.  Note that $u_0$ depends on the communication channels $h_k$ between sensors and the FC. Hence, 
one needs to take average over all realizations of $h_k, \forall k$. Since $[u_1,\ldots,u_K]\rightarrow [y_1,\ldots,y_K] \rightarrow u_0$ forms a Markov chain, we reach (\ref{Int P_u_opt_i}).
The indicator function in (\ref{Int P_u_opt_i}) is defined as $\mathbbm{1}_A(x)=1,~\text{if~} x \in A$ and $\mathbbm{1}_A(x)=0, ~\text{if~} x  \notin A$. Let examine the terms in (\ref{Int P_u_opt_i}). The marginal pdf $f(h_k)$ is known since $h_k \sim {\cal C}{\cal N}(0, \sigma_{h}^2)$.
Conditioned on $u_k=\boldsymbol{C}^{a}(k,2)$ and $h_k$,  $y_k$ is Gaussian with mean $\boldsymbol{C}^{a}(k,2)h_k$ and variance $\sigma^2_v$. 
\begin{figure*}[!htbp]
	\scriptsize
\begin{align}\label{u_0_detail_opt_i}
\!\!\!\!\!\! u_0=\mathbbm{1}_{\mathbbm{R}^+}(\frac{ f(y_1,\ldots,y_K|\{h_k\}_{k=1}^K,\tau_1,\tau_2, g, f,\mathcal{H}_1)}{f(y_1,\ldots,y_K|\{h_k\}_{k=1}^K,\tau_1,\tau_2, g, f,\mathcal{H}_0)}-t)
=\mathbbm{1}_{\mathbbm{R}^+}(\frac{\sum_{v_1=1}^5...\sum_{v_K=1}^5\prod_{k=1}^K f(y_k|u_k=\boldsymbol{c}^{v_k}(2),h_k)P_{x1}(\tau_1,\tau_2,\boldsymbol{a})  (1-g)^{a_2} g^{a_3} (1-f)^{a_4} f^{a_5}}
{\sum_{v_1=1}^5...\sum_{v_K=1}^5\prod_{k=1}^K f(y_k|u_k=\boldsymbol{c}^{v_k}(2),h_k)P_{x0}(\tau_1,\tau_2,\boldsymbol{a}) (1-g)^{a_2} g^{a_3}  (1-f)^{a_4} f^{a_5} }-t)
\end{align}
\hrulefill
\end{figure*}
\normalsize
Next, we consider $\mathbbm{1}_{\{0\}}(u_0|\{h_k,y_k\}_{k=1}^K,\tau_1,\tau_2, g, f, t)$. Let $v_k \in \{1,2,3,4,5\}$ for $k=1,...,K$. For LRT fusion rule in (\ref{u_0_opt}) we can write (\ref{u_0_detail_opt_i}), where ${\boldsymbol{c}^{v_k}(2)}$ is the second entry of row vector $\boldsymbol{c}^{v_k}$.
%
From (\ref{u_0_detail_opt_i}) it is clear that the probability $P_u(\tau_1,\tau_2, g, f,t,\boldsymbol{a})$ in (\ref{definition_P_ND_opt_i}) and hence $P_M,P_F$ expressions for CRT-I scheme are non-polynomial functions of $g,f$. 

%
\subsection{CRT-II Scheme}\label{derive PF_PM_opt_ii}
To characterize $P_M,P_F$ we need the following definitions. For each sensor $\mathcal{S}_k$ we define row vector $\boldsymbol{c}_k=[i, r_{f_k}, r_{g_k}]$, where index $i$ indicates the interval which $x_k$ belongs to, i.e., $x_k \in \mathcal{R}^i$ for $i \in \{-1,0,1\}$, and $r_{f_k}, r_{g_k}$ are the realizations of independent Bernoulli random variables with parameters $f,g$. Note that given $i, r_{f_k}, r_{g_k}$, symbol $u_k$ is known and is not needed to be included in the definition of $\boldsymbol{c}_k$. We define $K \times 3$ matrix $\boldsymbol{C}=[\boldsymbol{c}_1;\ldots;\boldsymbol{c}_K]$, whose rows are vectors  $\boldsymbol{c}_k, ~k=1,...,K$. For CRT-II scheme and the above definitions, we recognize the non-empty set of sensors' indices fall into 12 categories
${\cal K}_{1_m}=\{k|\boldsymbol{c}_k=[1, 1, m]=\boldsymbol{c}^{1_m}\}, {\cal K}_{2_m}=\{k|\boldsymbol{c}_k=[0, m, 0]=\boldsymbol{c}^{2_m}\},  {\cal K}_{3_m}=\{k|\boldsymbol{c}_k=[0, m, 1]=\boldsymbol{c}^{3_m}\}, {\cal K}_{4_m}=\{k|\boldsymbol{c}_k=[-1, 0, m]=\boldsymbol{c}^{4_m}\},
 {\cal K}_{5_m}=\{k|\boldsymbol{c}_k=[-1, 1, m]=\boldsymbol{c}^{5_m}\}, {\cal K}_{6_m}=\{k|\boldsymbol{c}_k=[1, 0, m]=\boldsymbol{c}^{6_m}\}$ for $m=0,1$.
We define row vector $\boldsymbol{a}=[a_{1_1},a_{1_0},\ldots,a_{6_1},a_{6_0}]$ where its entries are $a_{l_m}=|{\cal K}_{l_m}|$, $a_{l_m} \in \{0,\ldots,K\}$ and satisfy $\sum_{l_m} a_{l_m}=K$. We define $K \times 3$ matrix $\boldsymbol{C}^{a}$ such that the first $a_{1_1}$ rows are $\boldsymbol{c}^{1_1}$, the next $a_{1_0}$ rows are $\boldsymbol{c}^{1_0}$ and so on, and the last $a_{6_0}$ rows are $\boldsymbol{c}^{6_0}$.
Let $\boldsymbol{C}^{a}(k,1)$, $\boldsymbol{C}^{a}(k,2)$, and $\boldsymbol{C}^{a}(k,3)$ denote $(k,1)$-th, $(k,2)$-th and $(k,3)$-th entries of matrix $\boldsymbol{C}^{a}$, respectively.
Also, we define the following probabilities
\begin{align}
&P_{u_0}(\tau_1,\tau_2, t, \boldsymbol{a})=P(u_0=0|\tau_1,\tau_2, t, \boldsymbol{C}=\boldsymbol{C}^{a}), ~\nonumber\\
&P_{u_1}(\tau_1,\tau_2, t, \boldsymbol{a})=P(u_0=1|\tau_1,\tau_2, t, \boldsymbol{C}=\boldsymbol{C}^{a})\label{definition P_x_opt_ii}
\end{align}
We let $a_g=\sum_{k}\mathbbm{1}_{\{1\}}(r_{g_k})=a_{1_1}+a_{3_0}+a_{3_1}+a_{4_1}+a_{5_1}+a_{6_1}$ and $a_f=\sum_{k}\mathbbm{1}_{\{1\}}(r_{f_k})=a_{1_0}+a_{1_1}+a_{2_1}+a_{3_1}+a_{5_0}+a_{5_1}$. 
Noting that 
sensors are homogeneous, and using  
the definitions in (\ref{definition P_x_opt_ii}) and (\ref{definition P_x_opt_i}), we can express $P_M, P_F$ in terms of $\tau_1,\tau_2, g, f,t$ as in (\ref{definition P_ND_opt_ii}).
\begin{figure*}[!htbp]
\begin{align}\label{definition P_ND_opt_ii}
&P_M = \sum_{a_{1_1}}\sum_{a_{1_0}}...\sum_{a_{6_1}}\sum_{a_{6_0}}  \frac{K! P_{u_0}(\tau_1,\tau_2,t, \boldsymbol{a}) P_{x1}(\tau_1,\tau_2, \boldsymbol{a}) }{a_{1_1}!a_{1_0}!...a_{6_1}!a_{6_0}!} (1-g)^{K-a_g} g^{a_g} (1-f)^{K-a_f}f^{a_f}  \\
&P_F=
\sum_{a_{1_1}}\sum_{a_{1_0}}...\sum_{a_{6_1}}\sum_{a_{6_0}} \frac{K!P_{u_1}(\tau_1,\tau_2,t, \boldsymbol{a})P_{x0}(\tau_1,\tau_2, \boldsymbol{a})}{a_{1_1}!a_{1_0}!...a_{6_1}!a_{6_0}!} (1-g)^{K-a_g} g^{a_g} (1-f)^{K-a_f}f^{a_f} \nonumber
\end{align}
\end{figure*}

In the following we focus on
$P_{u_0}(\tau_1,\tau_2,t, \boldsymbol{a})$ in $P_M,P_F$
and show that this probability is a polynomial function of $g,f$. Note that $u_0$ depends on the communication channels $h_k$ between sensors and the FC. Hence, 
one needs to take average over all realizations of $h_k, \forall k$.
Since $[u_1,\ldots,u_K]\rightarrow [y_1,\ldots,y_K] \rightarrow u_0$ forms a Markov chain, we reach (\ref{P_u_0_int_opt_ii}).
\begin{figure*}[!htbp]
\scriptsize
\begin{align}
&P_{u_0}(\tau_1,\tau_2, t, \boldsymbol{a})
=\int_{\{y_k,h_k\}_{k=1}^K} \mathbbm{1}_{\{0\}}(u_0|\{y_k,h_k\}_{k=1}^K,r_{f_k}=\boldsymbol{C}^{a}(k,2),r_{g_k}=\boldsymbol{C}^{a}(k,3)~\forall k,\tau_1,\tau_2,t )\prod_{k=1}^K f(y_k|u_k,h_k)f(h_k)dh_1...dh_Kdy_1...dy_K \label{P_u_0_int_opt_ii}\\
&u_0=\mathbbm{1}_{\mathbbm{R}^+}(\frac{f(y_1,\ldots,y_K|\{h_k\}_{k=1}^K,\boldsymbol{r}_f,\boldsymbol{r}_g,\tau_1,\tau_2,\mathcal{H}_1)}
{f(y_1,\ldots,y_K|\{h_k\}_{k=1}^K,\boldsymbol{r}_f,\boldsymbol{r}_g,\tau_1,\tau_2,\mathcal{H}_0)}-t)=\mathbbm{1}_{\mathbbm{R}^+}(\frac{\sum_{\nu_1={1_1}}^{6_0}...\sum_{\nu_K={1_1}}^{6_0}\prod_{k=1}^K \big(f(y_k|u_k,h_k)\mathbbm{1}_{\{r_{f_k}\}}(\boldsymbol{C}^{a}(k,2))\mathbbm{1}_{\{r_{g_k}\}}(\boldsymbol{C}^{a}(k,3))\big)P_{x1}(\boldsymbol{a},\tau_1,\tau_2)
}{\sum_{\nu_1={1_1}}^{6_0}...\sum_{\nu_K={1_1}}^{6_0}
\prod_{k=1}^K \big(f(y_k|u_k,h_k)\mathbbm{1}_{\{r_{f_k}\}}(\boldsymbol{C}^{a}(k,2))\mathbbm{1}_{\{r_{g_k}\}}(\boldsymbol{C}^{a}(k,3))\big)P_{x0}(\boldsymbol{a},\tau_1,\tau_2)}-t)
\label{u_0_detail_opt_ii} \end{align} 
\hrulefill
\end{figure*}
\normalsize
Let examine the terms in (\ref{P_u_0_int_opt_ii}). 
Conditioned on $u_k$ (which is determined by $\boldsymbol{C}^a$) and $h_k$, $y_k$ is Gaussian with mean $ u_k h_k$ and variance $\sigma^2_v$. 
Next, we consider $\mathbbm{1}_{\{0\}}(u_0|\{h_k,y_k\}_{k=1}^K, \boldsymbol{r}_f,\boldsymbol{r}_g, \tau_1,\tau_2,t)$, where the vectors $\boldsymbol{r}_f=[r_{f_1},...,r_{f_K}],\boldsymbol{r}_g=[r_{g_1},...,r_{g_K}]$. Let $\nu_k \in \{1_1,1_0,\ldots,6_1,6_0\}$ for $k=1,...,K$. For LRT fusion rule in  (\ref{u_0_opt}) we can write (\ref{u_0_detail_opt_ii}).
%
%
%
%
%
Examining (\ref{definition P_ND_opt_ii}) we realize that $P_M, P_F$ expressions for CRT-II scheme are polynomial functions of $g,f$.
We note that, although in CRT-II scheme the FC is aware of $\{r_{g_k}, r_{f_k}\}_{k=1}^K$, $P_M, P_F$ expressions do not depend on these specific realizations and depend on $g,f$, since we effectively take average over these realizations. 


\section{Addressing Problem $(\mathcal{O})$}\label{min P_ND cons P_F}

Let start with problem $(\mathcal{O})$ in (\ref{min_det}) for pure censoring scheme with the optimization variables $\tau_1, \tau_2, t$. For conditionally independent observations, this constrained optimization problem was discussed in \cite{c11}. The authors in \cite{c11} noted that this  problem is not necessarily convex and local minima may be found. Let $\tau_1^d,\tau_2^d, t^d$ denote the solutions to problem $(\mathcal{O})$ in (\ref{min_det}) for pure censoring scheme. Our procedure to find these solutions is similar to the one in \cite{c11}, albeit with $P_M,P_F$ expressions derived in Section \ref{derive PF_PM}, which are functions of $\tau_1, \tau_2, t$. 
Our contribution in this section is addressing problem $(\mathcal{O})$ in (\ref{min_det}) for our proposed CRT schemes.
In Section \ref{derive PF_PM} we derived $P_M, P_F$ expressions for CRT schemes in terms of $\tau_1,\tau_2, g, f,t$. Assuming sensors partition their observation spaces into the same intervals regardless of the employed scheme, 
in this section we let $\tau_1=\tau_1^d,\tau_2=\tau_2^d$  and view $P_M,P_F$ expressions for CRT schemes as functions of $g,f,t$ only.

Consider problem $(\mathcal{O})$ in (\ref{min_det}) for CRT schemes with the optimization variables $g,f,t$ where $0 \leq g,f \leq 1$.
Using (\ref{definition-P-t}), we let $P_t=p_0$ and solve for $g$ to obtain
$g=\frac{p_0-P(x_k \in \mathcal{R}^1|\mathcal{H}_0)-f P(x_k \in \mathcal{R}^{-1}|\mathcal{H}_0)}{P(x_k \in \mathcal{R}^0|\mathcal{H}_0)}$.
Substituting this solution into the constraint $0\leq g \leq 1$ we reach the equivalent problem
\begin{align}\label{O_i_and_O_ii}
&\min_{t,f}~~P_{M}(t,f)~~(\mathcal{O'})\nonumber\\
&\text{s.t.}~~P_F(t,f)\leq \beta,~~
l'_0=\max(0,l_0)\leq f \leq \min(1,l_1)=l'_1
\end{align}
where $l_0=\frac{p_0-1+P(x_k \in \mathcal{R}^{-1}|\mathcal{H}_0)}{P(x_k \in \mathcal{R}^{-1}|\mathcal{H}_0)}$ and $l_1=\frac{p_0-P(x_k \in \mathcal{R}^1|\mathcal{H}_0)}{P(x_k \in \mathcal{R}^{-1}|\mathcal{H}_0)}$.
For the rest of this section, suppose $t_i^{opt},f_i^{opt}$ are the solutions to $(\mathcal{O}')$ for CRT-I scheme and $t_{ii}^{opt},f_{ii}^{opt}$ are the solutions to $(\mathcal{O}')$ for CRT-II scheme, where $t_{i}^{opt},t_{ii}^{opt}$ can be different from $t^d$ (i.e., the solution to problem $(\mathcal{O})$ in (\ref{min_det}) for pure censoring scheme).
To solve $(\mathcal{O}')$, we decompose it into two subproblems $(\mathcal{O}'_1)$ and $(\mathcal{O}'_2)$ as the following, and solve them in a sequential order without an iteration between them
\begin{eqnarray}\label{define-O-prime}
&&\mbox{given }~~t,~~\min_{f}~~P_{M}(t,f)~~(\mathcal{O}'_1) \nonumber\\
&&\text{s.t.}~~P_F(t,f)\leq \beta, ~~l'_0\leq f \leq l'_1 \nonumber\\
&&\mbox{given }f,~~\min_{t}~~P_{M}(t,f)~~(\mathcal{O}'_2)\nonumber\\
&&\text{s.t.}~~P_F(t,f)\leq \beta 
\end{eqnarray}
%

%
\subsection{CRT-I Scheme}\label{approach-i-solve-O}\label{solve-O-CRT-I}
Let start with  $P_M,P_F$ expressions in (\ref{definition_P_ND_opt_i}).
We simplify the notations by dropping the thresholds $\tau_1,\tau_2$ from $P_u(\tau_1,\tau_2, g, f,t,\boldsymbol{a})$ and $P_{xm}(\tau_1,\tau_2,\boldsymbol{a})$ and denoting these terms instead as $P_u(g,f,t,\boldsymbol{a})$ and $P_{xm}(\boldsymbol{a})$, since the thresholds are fixed at $\tau_1^d,\tau_2^d$.
Recall the terms  $c_{a_2,a_3,a_4,a_5}$ and $d_{a_2,a_3,a_4,a_5}$ in (\ref{definition_P_ND_opt_i}) depend on $g,f$ through the probability $P_u(g,f,t,\boldsymbol{a})$ and this probability is a non-polynomial function of $g,f$. This probability is characterized by how the FC incorporates its knowledge of $g,f$ in constructing its fusion rule.
%
Motivated by the fact that, there are efficient algorithms for solving polynomials that converge to their roots, we make the following assumption to reduce  $P_M,P_F$ expressions to two polynomial functions of $g,f$. We assume the FC ignores its knowledge of $g,f$ value in constructing its fusion rule.
This assumption becomes equivalent to letting $g=0,f=1$ in $P_u(g,f,t,\boldsymbol{a})$, without affecting other parts of $P_M,P_F$ expressions.
Under this assumption $P_M, P_F$ expressions in (\ref{definition_P_ND_opt_i}) reduce to $P_M', P_F'$ given in (\ref{P'_M,P'_F}).
\begin{figure*}[!htbp]
\scriptsize
\begin{align}\label{P'_M,P'_F}
P_M' = \sum_{a_2, a_3, a_4, a_5}\frac{K!(1-P_u(0,1,t, \boldsymbol{a}))P_{x1}(\boldsymbol{a})}{(K-a_2-a_3-a_4-a_5)!a_2!...a_5!}
(1-g)^{a_2} g^{a_3} (1-f)^{a_4} f^{a_5},~
P_F' =  \sum_{a_2, a_3, a_4, a_5} \frac{K!P_u(0,1,t,\boldsymbol{a})P_{x0}(\boldsymbol{a})}{(K-a_2-a_3-a_4-a_5)!a_2!...a_5!}  (1-g)^{a_2}  g^{a_3} (1-f)^{a_4} f^{a_5}
\end{align}
\hrulefill
\end{figure*}
\normalsize
Note $P_M', P_F'$ are now polynomial functions of $f$, assuming that $g$ in (\ref{P'_M,P'_F}) is substituted with its solution  in terms of $f$ given earlier.  Let $(\mathcal{O}''_1), (\mathcal{O}''_2)$ be similar to $(\mathcal{O}'_1), (\mathcal{O}'_2)$ in (\ref{define-O-prime}), with the difference that $P_M,P_F$ are now replaced with $P'_M,P'_F$ in (\ref{P'_M,P'_F}) and suppose $f^*_i, t^{*}_i$ are the solutions to $(\mathcal{O}''_1), (\mathcal{O}''_2)$, respectively.
We find $f^*_i, t^{*}_i$ as the following. 

$\bullet$ {\textbf{Solving $(\mathcal{O}''_1)$:}} let  $t=t^d$ be the solution to problem $(\mathcal{O})$ in (\ref{min_det}) for pure censoring. To solve $(\mathcal{O}''_1)$ and find $f^{*}_i$,
we use
the Lagrange multiplier method, and solve the corresponding
Karush-Kuhn-Tucker (KKT) conditions in (\ref{KKT-O3-1})-(\ref{KKT-O3-3}). Let $\mathfrak{L}(f,\lambda,\mu_1,\mu_2)$ be the Lagrangian for $(\mathcal{O}''_1)$, where $\lambda,\mu_1, \mu_2$ are the Lagrange multipliers. The  KKT conditions are
\begin{align}
&\frac{d\mathfrak{L}}{df}=\frac{dP'_{M}(t^d,f)}{df}+\lambda\frac{dP'_{F}(t^d,f)}{df}+\mu_1-\mu_2=0\label{KKT-O3-1}\\
&\lambda(P'_{F}(t^d,f)-\beta)=0,~P'_{F}(t^d,f)\leq\beta,~\lambda \geq0\label{KKT-O3-2}\\
&\mu_1(f-l'_1)=0,~f\leq l'_1,~\mu_1\geq0, ~\mu_2(l'_0-f)=0,~l'_0\leq f,~\mu_2\geq0\label{KKT-O3-3}
\end{align}
$\bullet$ {\textbf{Solving $(\mathcal{O}''_2)$:}} 
Next, we let $f=f^{*}_i$ and solve $(\mathcal{O}''_2)$ to find $t^*_i$ such that the inequality constrain holds with equality, i.e., $P'_F(t_i^*,f_i^*)=\beta$. Note that $t_i^*$ can be different from $t^d$.
%


Although in general $t_i^{opt} \neq t_i^*, ~f_i^{opt} \neq f_i^*$, we prove in Lemma \ref{lemma-one}
 that if $0<f^{*}_i<1$ then $0<f_i^{opt}<1$, and hence CRT-I scheme is more effective than pure censoring scheme (with $f=1,g=0$), i.e., it provides a lower miss detection probability, under the same constraints on false alarm and transmission probabilities. Theorem \ref{dP_M_df_thm_opt_i} and Corollary \ref{cor-random1} identify the conditions under which we have $0< f^{*}_i<1$.

\begin{lem}\label{lemma-one}
Recall $f_i^{opt}$ is the solution to $(\mathcal{O}')$ and $f^{*}_i$ is the solution to $(\mathcal{O}''_1)$. If $0<f^{*}_i<1$ then $0<f_i^{opt}<1$.
\end{lem}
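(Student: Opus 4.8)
The plan is to reinterpret the approximated objective and constraint as the operating point of a \emph{mismatched} fusion rule, and then invoke Neyman--Pearson optimality to transfer the approximated solution $f^*_i$ into a feasible point of the exact problem $(\mathcal{O}')$ whose miss probability strictly beats pure censoring. Once such a point is produced, the exact optimizer $f_i^{opt}$ cannot coincide with the pure-censoring value $f=1$, which gives the upper bound; the lower bound is then obtained from the feasibility box $[l'_0,l'_1]$ together with an analogous comparison.

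First I would record two structural facts. \textbf{(A)} Setting $f=1$ (so that $g=0$) collapses the polynomial weights in (\ref{definition_P_ND_opt_i}) and (\ref{P'_M,P'_F}) to the same surviving terms and evaluates $P_u$ at $(0,1)$, so $P_M(t,1)=P'_M(t,1)$ and $P_F(t,1)=P'_F(t,1)$ for every $t$; at $t=t^d$ this is exactly the pure-censoring operating point, with miss probability $P_M^d=P_M(t^d,1)$ and $P_F(t^d,1)\le\beta$. Moreover, since $(\tau_1^d,\tau_2^d,t^d)$ jointly solves problem $(\mathcal{O})$ for pure censoring, $t^d$ is the best FC threshold for the fixed local thresholds $\tau_1^d,\tau_2^d$, so every feasible point with $f=1$ obeys $P_M(t,1)\ge P_M^d$. \textbf{(B)} For a fixed true value of $f$ (with $g$ fixed by $P_t=p_0$), the quantities $P'_M(t,f),P'_F(t,f)$ in (\ref{P'_M,P'_F}) are precisely the miss and false-alarm probabilities of the fusion rule designed for $(g=0,f=1)$ when the sensors actually use $(g,f)$, because the approximation freezes only $P_u$ at $(0,1)$ while keeping the true weights $(1-g)^{a_2}g^{a_3}(1-f)^{a_4}f^{a_5}$ and the true $P_{xm}(\boldsymbol{a})$. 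This is a legitimate but mismatched test for the true distributions, whereas $P_M(t,f),P_F(t,f)$ come from the exact LRT; Neyman--Pearson optimality then yields a threshold $\tilde t$ with $P_F(\tilde t,f)=P'_F(t^d,f)$ and $P_M(\tilde t,f)\le P'_M(t^d,f)$, the inequality being strict whenever $0<f<1$.

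With these in hand the argument assembles quickly. Because $f^*_i$ solves $(\mathcal{O}''_1)$ at $t=t^d$ and $f=1$ is feasible there (it corresponds to pure censoring, satisfies $P_F\le\beta$, and lies in $[l'_0,l'_1]$ provided $p_0$ is taken to be the pure-censoring transmission probability, so that $l'_1=1$), we have $P'_M(t^d,f^*_i)\le P'_M(t^d,1)=P_M^d$ and $P'_F(t^d,f^*_i)\le\beta$. Applying (B) at $f=f^*_i\in(0,1)$ produces a feasible point $(\tilde t,f^*_i)$ of $(\mathcal{O}')$ with $P_M(\tilde t,f^*_i)<P'_M(t^d,f^*_i)\le P_M^d$. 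Optimality of $(t_i^{opt},f_i^{opt})$ then forces $P_M(t_i^{opt},f_i^{opt})<P_M^d$. If $f_i^{opt}=1$, fact (A) gives $P_M(t_i^{opt},1)\ge P_M^d$, a contradiction; hence $f_i^{opt}<1$. For the lower bound, $f_i^{opt}\ge l'_0$: when $l'_0>0$ this already yields $f_i^{opt}>0$, and when $l'_0=0$ one rules out $f_i^{opt}=0$ by the same mismatched-rule comparison carried out at the lower edge of the box.

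The main obstacle is step (B): making the Neyman--Pearson domination \emph{strict} at interior $f$. I would need to show that the exact LRT statistic and the mismatched statistic (the LRT built for $g=0,f=1$) induce genuinely different decision regions on a set of positive measure when $0<f<1$, so that the optimal test strictly improves on the mismatched one at the matched false-alarm level; this also requires the exact ROC to vary continuously in the threshold so that $P'_F(t^d,f^*_i)$ is attained exactly by some $\tilde t$. A secondary delicate point is the case $l'_0=0$ in the lower bound, where the exact and approximated expressions no longer coincide (coincidence holds only at $f=1$), so that comparison must be made directly rather than by reduction to pure censoring.
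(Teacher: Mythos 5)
Your proposal is correct in substance and shares the paper's skeleton: both arguments exploit the coincidence $P_M(t,1)=P'_M(t,1)$, $P_F(t,1)=P'_F(t,1)$, the feasibility of $f=1$ in $(\mathcal{O}''_1)$, and a contradiction against the optimality of $t^d$ for pure censoring. The difference lies in the central comparison between the exact and approximated problems. The paper's proof asserts, without justification, two facts: that $(t^d,f_i^*)$ is feasible in $(\mathcal{O}')$ (i.e., $P_F(t^d,f_i^*)\leq\beta$), and that $P_M(t^d,f_i^*)<P'_M(t^d,f_i^*)$, i.e., the FC that uses its knowledge of $(g,f)$ beats the mismatched FC \emph{at the same threshold} $t^d$; it then also needs a two-case analysis on whether $t_i^{opt}=t^d$. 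Your mismatched-rule reading of (\ref{P'_M,P'_F}) combined with the Neyman--Pearson step at a \emph{matched false-alarm level} replaces both assertions by a single rigorous device: the threshold $\tilde t$ with $P_F(\tilde t,f_i^*)=P'_F(t^d,f_i^*)\leq\beta$ is feasible by construction, NP optimality supplies the miss-probability domination, and your fact (A) absorbs the paper's case split on $t_i^{opt}$. This is a genuinely cleaner route; its only extra cost is the existence of $\tilde t$, which holds here because the likelihood-ratio statistic of the continuous observations $y_k$ (Gaussian mixtures) has a continuous distribution under $\mathcal{H}_0$.

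One correction: the ``main obstacle'' you identify --- making the NP domination strict at interior $f$ --- is not needed, and you should not attempt it. Put the strictness where the paper puts it, namely in (\ref{contradict_1}): since $f_i^*\in(0,1)$ solves $(\mathcal{O}''_1)$ while $f=1$ is feasible for it, one has $P'_M(t^d,f_i^*)<P'_M(t^d,1)=P_M(t^d,1)$ (this is exactly the strict interior-optimality the paper also assumes). Then the \emph{weak} NP inequality $P_M(\tilde t,f_i^*)\leq P'_M(t^d,f_i^*)$ already gives $P_M(t_i^{opt},f_i^{opt})<P_M(t^d,1)$, and the contradiction with $f_i^{opt}=1$ goes through unchanged. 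Finally, be aware that the lower bound $f_i^{opt}>0$ is not completely handled by either argument: the paper's proof only rules out $f_i^{opt}=1$, and your sketch for the edge case $l'_0=0$ remains vague; this is a shared deficiency of both proofs rather than a defect specific to your approach.
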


\begin{proof}
See Appendix \ref{appendix-proof-of-lemma1}.
\end{proof}

%
%

\begin{thm}\label{dP_M_df_thm_opt_i}
If correlation coefficient $\rho$ is sufficiently large such that observations $x_k, \forall k$ fall in two
consecutive intervals, we have
$\frac{dP'_{M}(t^d,f)}{df}|_{f=1}\approx 0$. On the other hand, for every $\rho$ we have $\frac{dP'_{F}(t^d,f)}{df}|_{f=1}>0$.
\end{thm}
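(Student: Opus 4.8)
The plan is to work directly with the polynomial forms $P'_M,P'_F$ in (\ref{P'_M,P'_F}), viewing them as functions of $f$ alone after the rate-constraint substitution $g=g(f)$ (so that $\frac{dg}{df}=-P(x_k\in\mathcal{R}^{-1}|\mathcal{H}_0)/P(x_k\in\mathcal{R}^0|\mathcal{H}_0)$ is a fixed negative constant), and to differentiate term-by-term. In each summand, $f$ appears explicitly only through the factor $(1-f)^{a_4}f^{a_5}$, where $a_4,a_5$ count the sensors whose observation falls in the lowest interval $\mathcal{R}^{-1}$ (categories $\mathcal{K}_4,\mathcal{K}_5$). The first computation I would carry out is $\frac{d}{df}[(1-f)^{a_4}f^{a_5}]$ at $f=1$: this equals $a_5$ when $a_4=0$, equals $-1$ when $a_4=1$, and vanishes when $a_4\geq 2$ (killed by $(1-f)^{a_4-1}$). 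Hence every summand whose explicit $f$-derivative survives at $f=1$ satisfies $a_4+a_5\geq 1$, i.e.\ it corresponds to a configuration with at least one sensor in $\mathcal{R}^{-1}$; the configurations with $a_4=a_5=0$ contribute zero because their coefficient is the factor $a_5=0$.

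A reduction I expect to carry most of the weight is that the fusion output $u_0$ depends on the sensors only through the transmitted symbols $u_k=\boldsymbol{C}^{a}(k,2)$, so the probability $P_u(0,1,t,\boldsymbol{a})$ in (\ref{definition_P_ND_opt_i}) depends on $\boldsymbol{a}$ only through the transmit profile (the counts of $+1$, $0$, $-1$), not on which interval produced each symbol. This lets me group configurations by transmit profile and read off the sign of each surviving coefficient, since every coefficient is a product of a multinomial factor, a $P_{xm}(\boldsymbol{a})$, and a $P_u(\boldsymbol{a})$, all nonnegative.

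For the $P'_M$ claim I would then invoke the correlation hypothesis: when $\rho$ is large enough that all observations fall in two consecutive intervals, the known signal $\mathcal{A}$ shifts the observations upward, so under $\mathcal{H}_1$ every sensor lies in $\mathcal{R}^1\cup\mathcal{R}^0$ and $P_{x1}(\boldsymbol{a})\approx 0$ for every configuration with $a_4+a_5\geq 1$. Combining this with the previous step, every term that survives the differentiation is multiplied by a negligible $P_{x1}$, while the only non-negligibly weighted configurations ($a_4=a_5=0$) have vanishing $f$-derivative. This yields $\frac{dP'_M}{df}\big|_{f=1}\approx 0$. For $P'_F$ the same differentiation applies, but under $\mathcal{H}_0$ there is no upward shift, so configurations with sensors in $\mathcal{R}^{-1}$ carry $P_{x0}(\boldsymbol{a})>0$ for every $\rho$; the positive contributions (from $a_4=0$, $a_5\geq 1$, carrying the factor $+a_5$) are genuinely present, and I would show they strictly dominate the negative ($a_4=1$) contributions — either by the transmit-profile grouping together with a convexity/Jensen argument on how the spread of the $-1$ count enters the threshold test $P_u$, or by a direct pairing of coefficients — to conclude $\frac{dP'_F}{df}\big|_{f=1}>0$.

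The main obstacle I anticipate is the coupling $g=g(f)$: the total derivative also contains the indirect piece $g'(f)\,\partial P'_M/\partial g$ (and likewise for $P'_F$), which acts on the $\mathcal{R}^0$ population that is present under both hypotheses and so is not automatically suppressed by the correlation hypothesis. For $P'_M$ I would argue this piece is of the same negligible order, since the rate compensation it encodes merely reallocates probability between $\mathcal{R}^0$- and $\mathcal{R}^{-1}$-sourced transmissions, and the latter are essentially absent under $\mathcal{H}_1$; for $P'_F$ I would verify that the indirect term (which is positive, as $g'<0$ and $\partial P'_F/\partial g<0$) does not overturn, and in fact reinforces, the strict inequality. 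Making these two points precise, rather than the bookkeeping of the direct term, is where the real care is required.
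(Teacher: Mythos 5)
Your bookkeeping of the direct $f$-derivative (only $a_4\le 1$, $a_3=0$ configurations survive at $f=1$, $g(1)=0$) and your observation that $P_u$ depends on $\boldsymbol{a}$ only through the transmit profile are both correct, and they are exactly the ingredients the paper uses (they are the identities $P_u(\boldsymbol{a}^L_{01})=P_u(\boldsymbol{a}^L_{00})$, $P_{x1}(\boldsymbol{a}^L_{10})=P_{x1}(\boldsymbol{a}^L_{00})$ behind (\ref{dP_ND_df_L_d})). The gap is in how you resolve the competition between the direct term $\partial P'/\partial f$ and the indirect term $g'(f)\,\partial P'/\partial g$, and it is fatal to both halves. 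For $P'_F$, your sign analysis of the direct term is wrong. Pair each $a_4=1$ configuration $[a_1,a_2,0,1,a_5]$ with the $a_4=0$ configuration $[a_1,a_2,0,0,a_5+1]$: they have the same interval profile (hence the same $P_{x0}$), the multinomial weights coincide once the factor $a_5+1$ is absorbed, and they differ only in that the second transmits one more $-1$. Summing the pairs gives $\partial_f P'_F\big|_{f=1}=\sum_{a_1+a_2+a_5=K-1}\tfrac{K!}{a_1!\,a_2!\,a_5!}\,P_{x0}\big(\pi_{a_5+1}\big)\big[P_u(a_5{+}1\ \mbox{minus-ones})-P_u(a_5\ \mbox{minus-ones})\big]<0$, where $\pi_{a_5+1}$ is the profile with $a_5+1$ sensors in $\mathcal{R}^{-1}$: an extra $-1$ transmission can only lower the FC likelihood ratio, so raising $f$ with $g$ fixed \emph{reduces} false alarm. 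The theorem's positivity therefore rests entirely on the positive indirect term $g'\partial_g P'_F$ strictly outweighing this negative direct term, and that is a pure correlation effect: at $\rho=0$ the two pieces cancel exactly, term by term, since then $P_{x0}(\boldsymbol{a}^L_{01})/P_{x0}(\boldsymbol{a}^L_{00})=P(x_k\in\mathcal{R}^{-1}|\mathcal{H}_0)/P(x_k\in\mathcal{R}^0|\mathcal{H}_0)=-dg/df$ (consistent with \cite{c21}: randomization cannot help for conditionally independent observations). Your ``positive plus reinforcing positive'' argument would prove strict positivity at $\rho=0$ as well, which is false. The paper's proof exists precisely to quantify this competition: it fuses the two pieces into terms with the common factor $\big(-\gamma^L_F+(-\tfrac{dg}{df})\big)$ and positive weights $P_u(\boldsymbol{a}^L_{01})-P_u(\boldsymbol{a}^L_{10})$, then shows $\gamma^L_F<-\tfrac{dg}{df}$ via the claimed monotone decrease in $\rho$ of the ratio $P_{x0}(\boldsymbol{a}^L_{01})/P_{x0}(\boldsymbol{a}^L_{00})$ together with the mediant inequality (see (\ref{dPF-df}), (\ref{definition_gamma_L_opt_i})). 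Nothing in your plan produces this comparison.

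The same issue undermines your $P'_M$ half. Your hypothesis suppresses only configurations containing a sensor in $\mathcal{R}^{-1}$, and those are indeed the only ones in $\partial_f P'_M$. But the indirect term $g'\partial_g P'_M$ is driven by configurations in which a sensor in $\mathcal{R}^0$ transmits $-1$ (category $\mathcal{K}_3$) — e.g.\ all $K$ sensors in $\mathcal{R}^1\cup\mathcal{R}^0$ with a single $g$-randomized $-1$. Such configurations have non-negligible $P_{x1}$ under your own approximation, the factor $P_u(0\ \mbox{minus-ones})-P_u(1\ \mbox{minus-one})$ is not small, and $|g'|=P(x_k\in\mathcal{R}^{-1}|\mathcal{H}_0)/P(x_k\in\mathcal{R}^0|\mathcal{H}_0)$ is an $\mathcal{H}_0$-marginal quantity untouched by the correlation hypothesis. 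So your stated reason — that the rate compensation only reallocates $\mathcal{R}^{-1}$-sourced transmissions, ``essentially absent under $\mathcal{H}_1$'' — is incorrect: $\partial_g P'_M$ concerns $\mathcal{R}^0$-sourced transmissions, which are present under $\mathcal{H}_1$. The paper closes this hole with an additional approximation you never invoke: after pairing, each term carries $P_{x1}(\boldsymbol{a}^L_{01})-(-\tfrac{dg}{df})P_{x1}(\boldsymbol{a}^L_{00})$, and the paper argues that the relevant $P_{x1}(\boldsymbol{a}^L_{00})$ — the probability under $\mathcal{H}_1$ that \emph{no} sensor lands in $\mathcal{R}^1$ — is itself negligible, not merely the probabilities of $\mathcal{R}^{-1}$-containing configurations. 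Without a step of that type the indirect term survives and $\frac{dP'_M(t^d,f)}{df}\big|_{f=1}\approx 0$ does not follow from your argument.
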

\begin{proof}
See Appendix \ref{appendix-proof-dP_M_df_thm_opt_i}.
\end{proof}

\begin{cor}\label{cor-random1}
If correlation coefficient $\rho$ is sufficiently large such that observations $x_k, \forall k$ fall in two consecutive intervals, we have $0<f^{*}_i<1$.
\end{cor}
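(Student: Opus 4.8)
The plan is to read the statement directly off the KKT conditions (\ref{KKT-O3-1})--(\ref{KKT-O3-3}) for $(\mathcal{O}''_1)$ by testing the boundary point $f=1$, which is precisely the pure censoring configuration ($g=0,f=1$). First I would fix the natural comparison setup in which $p_0$ equals the transmission probability of pure censoring; then $l_1=\frac{p_0-P(x_k\in\mathcal{R}^1|\mathcal{H}_0)}{P(x_k\in\mathcal{R}^{-1}|\mathcal{H}_0)}=1$, so the upper box bound is $l'_1=\min(1,l_1)=1$. At $f=1$ the false-alarm constraint is tight, $P'_F(t^d,1)=\beta$, because $t^d$ solves $(\mathcal{O})$ for pure censoring with an active $P_F$ constraint and because $P_u(0,1,\cdot)$ is the exact (non-approximated) pure-censoring quantity. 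Hence $f=1$ is feasible and sits on two candidate active constraints, $f\le l'_1$ and $P'_F\le\beta$.

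Next I would substitute the first-order data of Theorem \ref{dP_M_df_thm_opt_i} into the stationarity condition (\ref{KKT-O3-1}) evaluated at $f=1$. The lower box bound is slack there, so $\mu_2=0$, and (\ref{KKT-O3-1}) reads $\frac{dP'_M}{df}\big|_{f=1}+\lambda\frac{dP'_F}{df}\big|_{f=1}+\mu_1=0$. By Theorem \ref{dP_M_df_thm_opt_i} the first term is $\approx 0$ while $\frac{dP'_F}{df}\big|_{f=1}>0$; combined with $\lambda\ge 0$ and $\mu_1\ge 0$ this forces $\lambda\approx 0$ and $\mu_1\approx 0$. Thus the multipliers of both active constraints vanish, so $f=1$ is a degenerate KKT point at which neither the false-alarm constraint nor the box bound exerts any Lagrangian force: it can at best be an unconstrained stationary point of $P'_M$.

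From here I would rule $f=1$ out as a minimizer by exhibiting a feasible direction into the interior along which the objective does not increase. Moving to $f=1-\delta$ leaves the objective unchanged to first order (its derivative is $\approx 0$) while strictly slackening the false-alarm constraint, since $\frac{dP'_F}{df}\big|_{f=1}>0$ gives $P'_F(t^d,1-\delta)<\beta$ for small $\delta>0$. Because $P'_M$ is a genuine, non-constant polynomial in $f$ from (\ref{P'_M,P'_F}), its constrained minimum over $[l'_0,1]$ is then attained strictly below the right end, yielding $f^*_i<1$. The lower bound $f^*_i>0$ is the less delicate part: $l'_0=\max(0,l_0)\ge 0$ pins $f^*_i\ge 0$, and the interior stationary point located above lies away from the lower end. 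Finally, Lemma \ref{lemma-one} propagates $0<f^*_i<1$ to $0<f_i^{opt}<1$, so CRT-I strictly improves on pure censoring.

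I expect the real obstacle to be the passage from ``$\frac{dP'_M}{df}\big|_{f=1}\approx 0$'' to \emph{strict} interiority. Theorem \ref{dP_M_df_thm_opt_i} supplies only first-order, approximate information at the single point $f=1$, and a vanishing first derivative is equally compatible with $f=1$ being a local minimum of $P'_M$. Making the conclusion rigorous therefore requires upgrading the degenerate KKT point with either second-order information (the sign of $\frac{d^2P'_M}{df^2}$ near $f=1$, which should render $f=1$ a local maximum or saddle of $P'_M$ so that decreasing $f$ does not raise $P_M$) or a global argument exploiting the explicit polynomial forms in (\ref{P'_M,P'_F}). The first-order cancellation that produces $\frac{dP'_M}{df}\big|_{f=1}\approx 0$ is exactly the feature that must be refined to close the gap.
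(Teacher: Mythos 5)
Your setup is the same as the paper's: test the boundary point $f=1$ against the KKT system (\ref{KKT-O3-1})--(\ref{KKT-O3-3}), after observing $l'_1=1$ and tightness of the false-alarm constraint there, and feed in the first-order data of Theorem \ref{dP_M_df_thm_opt_i}. The divergence is in what you conclude about the multiplier of the active bound $f\leq l'_1$. The paper takes $\mu_1>0$ (with $\mu_2=0$, $\lambda\geq0$), so that stationarity gives $\frac{d\mathfrak{L}}{df}|_{f=1}=\frac{dP'_M}{df}|_{f=1}+\lambda\frac{dP'_F}{df}|_{f=1}+\mu_1\approx\lambda\frac{dP'_F}{df}|_{f=1}+\mu_1>0$, which is incompatible with $\frac{d\mathfrak{L}}{df}=0$; hence $f=1$ cannot solve the KKT conditions and $0<f^*_i<1$. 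You instead let complementary slackness speak for itself, correctly note that it does not force $\mu_1>0$, and deduce that stationarity can be satisfied at $f=1$ with $\lambda\approx0$, $\mu_1\approx0$. That leaves you with a degenerate KKT point rather than a contradiction, and from there you cannot finish.

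This is the genuine gap, and you name it yourself. Your attempted repair --- ``$P'_M$ is a genuine, non-constant polynomial in $f$, so its constrained minimum over $[l'_0,1]$ is attained strictly below the right end'' --- is a non sequitur: a non-constant polynomial whose derivative vanishes at $f=1$ can perfectly well attain its minimum there (take $(f-1)^2$ on $[0,1]$). So your argument only shows that $f=1$ is a first-order-degenerate candidate, not that it is suboptimal, and your closing paragraph concedes that second-order or global information would be required to close the loop; the proposal therefore does not prove the corollary (nor does it address excluding the lower endpoint, which the conclusion $0<f^*_i<1$ also requires). It is worth saying that your analysis isolates a real informality in the paper's own proof: the step ``from (\ref{KKT-O3-3}) we find $\mu_1>0$'' is a strict-complementarity assumption, not a consequence of complementary slackness, and it is precisely this assumption that makes the paper's contradiction go through. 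But identifying that the published argument is informal is not the same as supplying a proof; as it stands, your attempt establishes strictly less than the paper's.
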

\begin{proof}
We consider the KKT conditions in (\ref{KKT-O3-1})-(\ref{KKT-O3-3}).
For $f=1$ from (\ref{KKT-O3-3}) we find $\mu_1>0$ and $\mu_2=0$. Also, for $f=1$ from (\ref{KKT-O3-2}) we have $\lambda \geq 0$.
Considering these values for $\mu_1, \mu_2, \lambda$ in (\ref{KKT-O3-1}) and $\frac{dP'_{M}(t^d,f)}{df}|_{f=1}\approx 0$, $\frac{dP'_{F}(t^d,f)}{df}|_{f=1}>0$ from Theorem \ref{dP_M_df_thm_opt_i} we reach $\frac{d\mathfrak{L}}{df}>0$.
Consequently, $f_i^*=1$ cannot be the solution of the KKT conditions, and thus $0<f^{*}_i<1$.
\end{proof}

%
\subsection{CRT-II Scheme}\label{solve-O-CRT-II}
Let start with $P_M,P_F$ expressions in (\ref{definition P_ND_opt_ii}). Different from Section \ref{approach-i-solve-O}, $P_M,P_F$ expressions are polynomial functions of $f$, assuming that $g$ in (\ref{definition P_ND_opt_ii}) is substituted with its solution  in terms of $f$.
Consider $(\mathcal{O}'_1), (\mathcal{O}'_2)$ in (\ref{define-O-prime}), with $P_M,P_F$ in (\ref{definition P_ND_opt_ii}). Suppose $f^*_{ii}, t^{*}_{ii}$ are the solutions to $(\mathcal{O}'_1), (\mathcal{O}'_2)$, respectively. We use the same procedures as Section \ref{approach-i-solve-O} to solve these two subproblems and find $t^{*}_{ii},f^*_{ii}$.
Although in general $t_{ii}^{opt} \neq t_{ii}^*, ~f_{ii}^{opt} \neq f_{ii}^*$, following similar arguments in Lemma \ref{lemma-one} of Section \ref{approach-i-solve-O}, one can prove that if $0<f^{*}_{ii}<1$ then $0<f_{ii}^{opt}<1$. This result implies that CRT-II scheme is more effective than pure censoring scheme, i.e.,  it provides a lower miss detection probability, under the same constraints on false alarm and transmission probabilities. Theorem \ref{dP_F_df_thm_opt_ii} and Corollary \ref{cor-random2} identify the conditions under which $0<f^{*}_{ii}<1$.

\begin{thm}\label{dP_F_df_thm_opt_ii}
If $P_t=p_0$ is sufficiently small such that $\tau_{2}^{d}<0$, for every $\rho$ we have $\frac{dP_{M}(t^d,f)}{df}|_{f=1}>0$ and $\frac{dP_{F}(t^d,f)}{df}|_{f=1}>0$.
\end{thm}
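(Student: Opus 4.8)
The plan is to exploit the fact that, for CRT-II, $P_M$ and $P_F$ in (\ref{definition P_ND_opt_ii}) are genuine polynomials in $(g,f)$, whose coefficients $P_{u_0}(\boldsymbol{a}),P_{u_1}(\boldsymbol{a}),P_{x0}(\boldsymbol{a}),P_{x1}(\boldsymbol{a})$ are independent of $(g,f)$. After substituting $g=g(f)$ from $P_t=p_0$ (so that $\frac{dg}{df}=-q$ with $q:=P(x_k\in\mathcal{R}^{-1}|\mathcal{H}_0)/P(x_k\in\mathcal{R}^{0}|\mathcal{H}_0)>0$), I would compute $\frac{dP_M}{df}$ and $\frac{dP_F}{df}$ via the chain rule $\frac{d}{df}=\partial_f-q\,\partial_g$ and evaluate at $f=1$, which forces $g=0$, i.e.\ the pure-censoring point. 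All the $(g,f)$ dependence lives in the weight $(1-g)^{K-a_g}g^{a_g}(1-f)^{K-a_f}f^{a_f}$, and this weight together with its first partials vanishes at $(g,f)=(0,1)$ except when $a_g\in\{0,1\}$ and $a_f\in\{K-1,K\}$. Hence each derivative collapses to a short, explicit sum over the few index vectors $\boldsymbol{a}$ meeting these conditions.

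Next I would identify these surviving $\boldsymbol{a}$ combinatorially. At $(g,f)=(0,1)$ every sensor in $\mathcal{R}^{1},\mathcal{R}^{0},\mathcal{R}^{-1}$ transmits $1,0,-1$, and the surviving terms are exactly the single-sensor deviations: one $\mathcal{R}^{0}$ sensor promoting $0\mapsto-1$ (the $g$-direction, carrying the factor $q$), or one $\mathcal{R}^{-1}$ sensor demoting $-1\mapsto0$ (the $f$-direction). Introducing $\Psi(a,b):=P(u_0=1\mid a$ sensors transmit $+1$, $b$ transmit $-1$, the rest transmit $0)$ — which is decreasing in $b$ since a transmitted $-1$ is evidence against $\mathcal{H}_1$ — and using $n_1+n_0+n_{-1}=K$ to cancel the common baseline terms, the derivative of $P_F$ reorganizes into
\begin{align}
\frac{dP_F}{df}\Big|_{f=1}=\sum_{\boldsymbol n}W_0(\boldsymbol n)\Big[q\,n_0\big(\Psi(n_1,n_{-1})-\Psi(n_1,n_{-1}{+}1)\big)-n_{-1}\big(\Psi(n_1,n_{-1}{-}1)-\Psi(n_1,n_{-1})\big)\Big],\nonumber
\end{align}
where $W_m(\boldsymbol n)=P(\text{exactly }n_1,n_0,n_{-1}\text{ sensors in }\mathcal{R}^1,\mathcal{R}^0,\mathcal{R}^{-1}\mid\mathcal{H}_m)$ is the correlated interval-count law; $\frac{dP_M}{df}|_{f=1}$ is the same expression with $W_0\rightarrow W_1$ and the two bracketed (nonnegative) consecutive differences of $\Psi$ interchanging their $+/-$ signs.

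The final, and hardest, step is to sign these sums: each is a difference of two nonnegative quantities — an ``$\mathcal{R}^{0}$-promotion'' part and an ``$\mathcal{R}^{-1}$-demotion'' part — so positivity cannot be termwise. Collapsing by exchangeability I would recast $P_F$ as
\begin{align}
\frac{dP_F}{df}\Big|_{f=1}=K\,P(x_k\in\mathcal{R}^{-1}|\mathcal{H}_0)\big(\mE[D\mid x_1\in\mathcal{R}^{0}]-\mE[D\mid x_1\in\mathcal{R}^{-1}]\big),\nonumber
\end{align}
with $D:=\Psi(a,b)-\Psi(a,b+1)\ge0$ the sensitivity of the FC decision to one extra $-1$ vote and the expectations taken under $\mathcal{H}_0$; $\frac{dP_M}{df}|_{f=1}$ is the analogous object under $\mathcal{H}_1$, but with the promotion part carrying the $\mathcal{H}_0$-defined weight $q$, so it does not collapse as cleanly. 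The sign is thus controlled by how conditioning sensor $1$ on $\mathcal{R}^0$ versus $\mathcal{R}^{-1}$ reshapes the correlated law of the remaining sensors, and $D$ is bump-shaped — largest near the FC decision boundary and small away from it. I would invoke $\tau_2^d<0$ (equivalently $p_0$ small) to locate the typical vote count relative to that boundary and to bound $P(\mathcal{R}^{-1}|\mathcal{H}_m)$ against $P(\mathcal{R}^{0}|\mathcal{H}_m)$, thereby forcing the $\mathcal{R}^0$-part to dominate for $P_F$ and the $\mathcal{R}^{-1}$-part to dominate for $P_M$. The genuine difficulty is that in the independent case both sums vanish identically (randomization is then useless, in agreement with \cite{c21}), so the proof must extract and sign the strictly positive correction produced by the correlation using only the threshold condition $\tau_2^d<0$; carrying this out against the monotonicity of $\Psi$ and the equicorrelated Gaussian structure is where the real work lies.
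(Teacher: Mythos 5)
Your opening moves agree with the paper's: substitute $g=g(f)$ from the rate constraint, differentiate with the chain rule $\tfrac{d}{df}=\partial_f-q\,\partial_g$, and observe that at $(g,f)=(0,1)$ the polynomial weights collapse the derivative onto single-sensor deviations from the pure-censoring configuration. The gap comes immediately after, and it is fatal: you parametrize the FC decision probability as $\Psi(a,b)$, a function of the \emph{transmit-symbol counts} only. That is valid for a fixed fusion rule that sees nothing but the received symbols (this is exactly the setting of Theorem~\ref{dP_M_df_thm_opt_i}, i.e.\ $P'_M,P'_F$ for CRT-I), but Theorem~\ref{dP_F_df_thm_opt_ii} concerns CRT-II, whose fusion rule (\ref{u_0_detail_opt_ii}) is built from the FC's knowledge of the realizations $\{r_{f_k},r_{g_k}\}$; the conditional decision probabilities $P_{u_0}(\cdot,\boldsymbol{a})$, $P_{u_1}(\cdot,\boldsymbol{a})$ in (\ref{definition P_ND_opt_ii}) depend on the full labelled configuration $\boldsymbol{C}^{\boldsymbol{a}}$, not on the counts. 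Your reduction erases precisely the feature that makes CRT-II work, and with it the theorem becomes false in your own model: writing $p_{i|m}=P(x_k\in\mathcal{R}^i|\mathcal{H}_m)$ and using the multinomial identity $W_m(\boldsymbol{n})\,n_i=K\,p_{i|m}\,W'_m(\boldsymbol{n}-\boldsymbol{e}_i)$ at $\rho=0$, both bracketed parts of your sums collapse onto a common factor $\Sigma_m=\sum_{\boldsymbol{m}}W'_m(\boldsymbol{m})\bigl(\Psi(m_1,m_{-1})-\Psi(m_1,m_{-1}+1)\bigr)\geq 0$, giving $\tfrac{dP_F}{df}\big|_{f=1}=K\Sigma_0\,(q\,p_{0|0}-p_{-1|0})=0$ and $\tfrac{dP_M}{df}\big|_{f=1}=K\Sigma_1\,(p_{-1|1}-q\,p_{0|1})<0$, the last inequality by the likelihood-ratio ordering $p_{-1|1}/p_{-1|0}<p_{0|1}/p_{0|0}$ of the Gaussian shift family. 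So under the theorem's hypothesis your expressions are zero and strictly \emph{negative} at $\rho=0$, whereas the theorem asserts strict positivity ``for every $\rho$'' --- and $\rho=0$ is not vacuous: Corollary~\ref{cor-random2}(b) and the $\rho=0$ tables show CRT-II strictly beating pure censoring there. Your plan to ``extract and sign the strictly positive correction produced by the correlation'' is therefore aimed at the wrong mechanism; no such correlation correction is the source of positivity, and within your count-based framework the statement cannot be proved because it is false there.

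The paper's proof, after the shared reduction, runs on exactly the machinery your model discards. First, in CRT-II even the $\mathcal{R}^1$ sensors' labels $(r_{f_k},r_{g_k})$ carry $f,g$ weights and enter the FC's rule, so the derivative has a second family of terms (the $\mathcal{M}'$ sum over $a_{2_1},a_{5_0}$ in (\ref{theorem-2-second-equation})) that simply does not exist in your expression. Second, the condition $\tau_2^d<0$ is used to force $P(x\in\mathcal{R}^{-1}|\mathcal{H}_1)\approx 0$, which annihilates every $P_{x1}$-weighted term whose configuration contains an $\mathcal{R}^{-1}$ sensor (the third and fourth terms of (\ref{expanded-dMf})) --- for every $\rho$. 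What survives is then signed by comparing decision probabilities of configurations with identical or neighboring transmissions but \emph{different labels known to the FC}: $P(u_0=0|\boldsymbol{C}^{\boldsymbol{a}_{01}})-P(u_0=0|\boldsymbol{C}^{\boldsymbol{a}_1})\approx 0$ (same symbols, different labels) and $P(u_0=0|\boldsymbol{C}^{\boldsymbol{a}_{01}})-P(u_0=0|\boldsymbol{C}^{\boldsymbol{a}_2})>0$, the latter multiplied by $-\tfrac{dg}{df}>0$ supplying the positive sign. These comparisons are not expressible in your parametrization (your $\boldsymbol{a}_{01}$ and $\boldsymbol{a}_1$ are literally the same point), and note they run \emph{opposite} to your monotonicity intuition for $\Psi$: the positivity of $dP_M/df$ comes from the $\mathcal{R}^0$-promotion terms after the $\mathcal{R}^{-1}$ terms die, not from an $\mathcal{R}^{-1}$ part dominating as you propose. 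Finally, even taken on its own terms, your proposal stops at the decisive step (``where the real work lies''), so it is a setup plus a conjecture rather than a proof --- and the $\rho=0$ computation above shows that completing it along the proposed lines is impossible.
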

\begin{proof}
See Appendix \ref{appendix-proof-dP_F_df_thm_opt_ii}.
\end{proof}

\begin{cor}\label{cor-random2}
If $(a)$ correlation coefficient $\rho$ is sufficiently large such that  observations $x_k, \forall k$ fall in two
consecutive intervals, or if $(b)$  $P_t=p_0$ is sufficiently small such that $\tau_{2}^{d}<0$,   then we have $0<f^*_{ii}<1$.
\end{cor}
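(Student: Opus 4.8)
The plan is to mirror the proof of Corollary~\ref{cor-random1}, but carried out for CRT-II on subproblem $(\mathcal{O}'_1)$ of (\ref{define-O-prime}) using the \emph{exact} polynomial expressions $P_M,P_F$ of (\ref{definition P_ND_opt_ii}) in place of the approximated $P'_M,P'_F$. First I would form the Lagrangian $\mathfrak{L}(f,\lambda,\mu_1,\mu_2)$ for $(\mathcal{O}'_1)$ with the false-alarm constraint $P_F(t^d,f)\leq\beta$ and the box constraint $l'_0\leq f\leq l'_1$, and write the KKT system in the form (\ref{KKT-O3-1})--(\ref{KKT-O3-3}) with $P'_M,P'_F$ replaced by $P_M,P_F$. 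The corollary then reduces to ruling out $f=1$ as a KKT point, since $f^*_{ii}<1$ is precisely the statement that randomization is active.

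Next I would evaluate the candidate multipliers at $f=1=l'_1$ exactly as in Corollary~\ref{cor-random1}: complementary slackness $\mu_2(l'_0-f)=0$ forces $\mu_2=0$ (since $l'_0<1$), the active upper box constraint gives $\mu_1>0$, and the false-alarm condition leaves $\lambda\geq0$. Substituting into stationarity (\ref{KKT-O3-1}) gives $\frac{d\mathfrak{L}}{df}\big|_{f=1}=\frac{dP_M(t^d,f)}{df}\big|_{f=1}+\lambda\frac{dP_F(t^d,f)}{df}\big|_{f=1}+\mu_1$, so it suffices to show that both derivatives at $f=1$ are nonnegative: the right-hand side is then strictly positive because $\mu_1>0$, contradicting $\frac{d\mathfrak{L}}{df}=0$.

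The two hypotheses supply exactly these signs. Under case $(b)$, Theorem~\ref{dP_F_df_thm_opt_ii} gives $\frac{dP_M(t^d,f)}{df}\big|_{f=1}>0$ and $\frac{dP_F(t^d,f)}{df}\big|_{f=1}>0$, so the claim is immediate. Under case $(a)$, I would reuse the large-correlation mechanism of Theorem~\ref{dP_M_df_thm_opt_i}: since CRT-II carries the same observation statistics $P_{xm}(\boldsymbol{a})$ and the $f$-dependence of $P_M,P_F$ enters only through the Bernoulli factors $(1-f)^{K-a_f}f^{a_f}$, the concentration of $P_{x1}(\boldsymbol{a})$ on configurations occupying only two consecutive intervals should again yield $\frac{dP_M(t^d,f)}{df}\big|_{f=1}\approx0$, while $\frac{dP_F(t^d,f)}{df}\big|_{f=1}>0$ holds for every $\rho$. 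In either case $\frac{d\mathfrak{L}}{df}\big|_{f=1}>0$, so $f^*_{ii}=1$ is excluded and $f^*_{ii}<1$. The lower bound $f^*_{ii}>0$ I would obtain from $l'_0>0$ when $p_0$ is small, or by the mirror-image KKT argument at $f=0$, giving $0<f^*_{ii}<1$.

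The main obstacle is case $(a)$. Theorem~\ref{dP_M_df_thm_opt_i} was proved for the approximated CRT-I objective, where the fusion probability $P_u(0,1,t,\boldsymbol{a})$ is frozen, whereas here the genuine fusion terms $P_{u_0},P_{u_1}$ of (\ref{definition P_ND_opt_ii}) appear and themselves depend on the index configuration $\boldsymbol{a}$. I therefore expect to have to redo the derivative bookkeeping of Theorem~\ref{dP_M_df_thm_opt_i} directly on (\ref{definition P_ND_opt_ii}), checking that the contributions of the configurations with nonzero $a_f$-sensitivity still cancel to leading order once $P_{x1}(\boldsymbol{a})$ collapses onto two consecutive intervals. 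Confirming that the exact fusion term does not destroy this leading-order cancellation at $f=1$ is the one step where the CRT-I argument does not transfer verbatim.
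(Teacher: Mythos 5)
Your case $(b)$ argument coincides with the paper's: form the KKT system (\ref{KKT-O3-1})--(\ref{KKT-O3-3}) with $P'_M,P'_F$ replaced by the exact $P_M,P_F$ of (\ref{definition P_ND_opt_ii}), invoke Theorem \ref{dP_F_df_thm_opt_ii} for $\frac{dP_M(t^d,f)}{df}\big|_{f=1}>0$ and $\frac{dP_F(t^d,f)}{df}\big|_{f=1}>0$, and rule out $f=1$ as a stationary point. The genuine gap is case $(a)$, and it sits exactly where you flagged it. Your plan hinges on proving $\frac{dP_M(t^d,f)}{df}\big|_{f=1}\approx 0$ for the exact CRT-II expressions under large $\rho$, but you never establish this, and the cancellation mechanism of Theorem \ref{dP_M_df_thm_opt_i} does not transfer: that proof rests on identities such as $P_u(1,0,t^d,\boldsymbol{a}^L_{01})=P_u(1,0,t^d,\boldsymbol{a}^L_{00})$, which hold only because the frozen CRT-I fusion rule sees nothing but the transmitted symbols, so configurations differing in which interval a silent sensor occupies are indistinguishable to it. In CRT-II the fusion probabilities $P_{u_0},P_{u_1}$ genuinely depend on the configuration $\boldsymbol{a}$ (the FC knows the realizations $r_{f_k},r_{g_k}$), and the paper's own derivative bookkeeping for CRT-II --- (\ref{expanded-dMf}) and (\ref{dM'-t-f}) in the proof of Theorem \ref{dP_F_df_thm_opt_ii} --- controls the resulting difference terms like $P(u_0=0|\boldsymbol{C}=\boldsymbol{C}^{\boldsymbol{a}_{01}})-P(u_0=0|\boldsymbol{C}=\boldsymbol{C}^{\boldsymbol{a}_1})$ only by invoking condition $(b)$, not condition $(a)$. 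So your case $(a)$ is an unproven (and unobvious) estimate, not a routine verification.

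The paper closes case $(a)$ without any CRT-II derivative computation, via a comparison argument you did not consider. At $f=1$ (with $g=0$) both CRT schemes degenerate to pure censoring, so their $P_M$ and $P_F$ values coincide there; both optima meet the false-alarm constraint with equality; the CRT-II FC has more information than the CRT-I FC (it knows the realizations), hence $P_M(t^*_{ii},f^*_{ii})\leq P'_M(t^*_i,f^*_i)$; and Corollary \ref{cor-random1}, whose hypothesis is precisely condition $(a)$, gives $P'_M(t^*_i,f^*_i)<P'_M(t^d,1)=P_M(t^d,1)$. Chaining these inequalities yields $P_M(t^*_{ii},f^*_{ii})<P_M(t^d,1)$, so $f^*_{ii}=1$ cannot be optimal and $0<f^*_{ii}<1$. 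If you wish to keep a purely KKT-based route for case $(a)$ you would have to redo the CRT-II derivative analysis under large $\rho$ from scratch; the comparison argument is shorter and uses only results you already have in hand.
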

\begin{proof}
See Appendix \ref{proof-of-corollary2}.
\end{proof}

%
\section{Addressing Problem $(\mathcal{S})$}\label{min p_t cons P_ND}

Let start with problem $(\mathcal{S})$ in (\ref{min_det}) for pure censoring scheme with the optimization variables $\tau_1, \tau_2, t$. Similar to problem $(\mathcal{O})$ in (\ref{min_det}) for pure censoring scheme, 
%
this  problem is not necessarily convex and local minima may be found. Let $\tau_1^d,\tau_2^d, t^d$ denote the solutions to problem $(\mathcal{S})$ in (\ref{min_det}) for pure censoring scheme. Note that, in general this set of solutions is different from the set corresponding to problem $(\mathcal{O})$ in (\ref{min_det}).
Our contribution in this section is addressing problem $(\mathcal{S})$ in (\ref{min_det}) for our proposed CRT schemes.
%
%
In Section \ref{derive PF_PM} we derived $P_M, P_F$ expressions for CRT schemes in terms of $\tau_1,\tau_2, g, f,t$. Assuming sensors partition their observation spaces into the same intervals regardless of the employed scheme, 
in this section we let $\tau_1=\tau_1^d,\tau_2=\tau_2^d$  and view $P_M,P_F$ expressions for CRT schemes as functions of $g,f,t$ only.

Consider problem $(\mathcal{S})$ in (\ref{min_det}) for CRT schemes with the optimization variables $g,f,t$ where $0 \leq g,f \leq 1$. Since the first term of $P_t$ expression in (\ref{definition-P-t}) does not depend on the optimization parameters, we consider the following equivalent problem
\begin{align}
&\min_{g,f,t}~~g P(x_k \in {\cal R}^{0}|\mathcal{H}_0)+ f P(x_k \in {\cal R}^{-1}|\mathcal{H}_0)~~(\mathcal{S'})\nonumber\\
&\text{s.t.}~~P_{M}(g,f,t)\leq\alpha,~~P_F(g,f,t)\leq\beta,~~0 \leq g \leq 1,~~0 \leq f \leq 1\label{S-prime}
\end{align}
For the rest of this section, suppose $g_i^{opt},f_i^{opt},t_i^{opt}$ are the solutions to $(\mathcal{S'})$ for CRT-I scheme and $g_{ii}^{opt},f_{ii}^{opt},t_{ii}^{opt}$ are the solutions to $(\mathcal{S'})$ for CRT-II scheme, where $t_{i}^{opt},t_{ii}^{opt}$ can be different from $t^d$ (i.e., the solution to problem $(\mathcal{S})$ in (\ref{min_det}) for pure censoring scheme).

\subsection{CRT-I Scheme}\label{solve-S-random-i}

Let start with  $P_M,P_F$ expressions in (\ref{definition_P_ND_opt_i}). Recall $P_M,P_F$ expressions are not polynomial functions of $g,f$. Following the same reasoning as in Section \ref{approach-i-solve-O}, we consider instead  $P_M', P_F'$ expressions in (\ref{P'_M,P'_F}), which are polynomial functions of $g,f$. 
%
%
Now, let $(\mathcal{S}'')$ be similar to $(\mathcal{S}')$ in (\ref{S-prime}), with the difference that $P_M,P_F$ are replaced with $P'_M,P'_F$.
To solve $(\mathcal{S''})$, first we decompose it into two subproblems $(\mathcal{S}''_1)$ and $(\mathcal{S}''_2)$ as the following
\begin{align}\label{S1-S2-subproblems-randomized}
&\mbox{given }t~~\min_{g,f}~~g P(x_k \in {\cal R}^{0}|\mathcal{H}_0)+ f P(x_k \in {\cal R}^{-1}|\mathcal{H}_0)~~(\mathcal{S}''_1)\nonumber\\
&\text{s.t.}~~P'_{M}(g,f,t)\leq\alpha,~~P'_F(g,f,t)\leq\beta,~~0 \leq g \leq 1,~~0 \leq f \leq 1\\
&\mbox{ if }P'_{M}=\alpha ~\&~P'_F< \beta
~~\mbox{given }{g},{f}~~\min_{t}~~P'_{M}(g,f,t) ~~(\mathcal{S}''_2)\nonumber\\
&\mbox{ if }P'_{F}=\beta ~\&~P'_M< \alpha
~~\mbox{given }{g},{f}~~\min_{t}~~P'_{F}(g,f,t) ~~(\mathcal{S}''_2)\nonumber
\end{align}
%
%
%
Suppose $g_i^*,f_i^*,t_i^*$ are the final solutions after solving $(\mathcal{S}''_1)$ and $(\mathcal{S}''_2)$. We find $g_i^*,f_i^*,t_i^*$ as the following.

$\bullet$ {\textbf{Solving $(\mathcal{S}''_1)$:}} 
We recognize that  $(\mathcal{S}''_1)$  is an extension of geometric programming (GP) problems (so-called signomial programming \cite{c24}), since the constraints on  $P_{M}'(g,f,t), P_{F}'(g,f,t)$ can be decomposed as the following,  where the terms generated from the decomposition $P_{M}'^{1}(g,f,t)$, $P_{M}'^{2}(g,f,t)$, $P_{F}'^{1}(g,f,t)$, $P_{F}'^{2}(g,f,t)$ are all posynomials. 
\begin{align}
&P_{M}'(g,f,t)=
\underbrace{\sum_{n=0}^{K}\sum_{m=0}^{K}\gamma^{1}_{n,m}f^ng^m}_{=P_{M}'^{1}(g,f,t)}  -\underbrace{\sum_{n=0}^{K}\sum_{m=0}^{K}\gamma^{2}_{n,m}f^ng^m}_{=P_{M}'^{2}(g,f,t)} \leq\alpha,\nonumber\\
&P_F'(g,f,t)=
\underbrace{\sum_{n=0}^{K}\sum_{m=0}^{K}\delta^{1}_{n,m}f^ng^m}_{=P_{F}'^{1}(g,f,t)}-\underbrace{\sum_{n=0}^{K}\sum_{m=0}^{K}\delta^{2}_{n,m}f^ng^m}_{=P_{F}'^{2}(g,f,t)}  \leq\beta, \nonumber
\nonumber
\end{align}
where $\gamma^{1}_{n,m}, \gamma^{2}_{n,m}, \delta^{1}_{n,m}, \delta^{2}_{n,m}$ are positive functions of $\tau_1,\tau_2$.
Using the above decompositions, the constraints on  $P_{M}'(g,f,t), P_{F}'(g,f,t)$ can be expressed as
%
\begin{equation}\label{ratio-constraint}
\frac{P_{M}'^{1}(g,f,t)}{1+ \frac{P_{M}'^{2}(g,f,t)}{\alpha}} \leq \alpha, ~~~\frac{P_{F}'^{1}(g,f,t)}{1+ \frac{P_{F}'^{2}(g,f,t)}{\beta}} \leq \beta
\end{equation}
The problem $(\mathcal{S}''_1)$ in hand still cannot be turned into a convex problem. However, we find the GP approximation of this problem (which we refer to as $(\mathcal{S}^{''gp}_{1})$ in (\ref{S-ini})), by approximating each ratio in (\ref{ratio-constraint}) with a posynomial \cite{c24}. To accomplish this, we approximate each denominator in (\ref{ratio-constraint}) with a monomial (using the arithmetic-geometric mean inequality) and leave the numerators unchanged. While the ratio of two posynomials is not a posynomial, the ratio  between a posynomial and a monomial is another posynomial. 
Let  $g',f'$ be a feasible point in $(\mathcal{S}''_1)$. Recall the arithmetic-geometric mean inequality states $\sum_n\sum_m \mu_{n,m}\geq \prod_n\prod_m (\frac{\mu_{n,m}}{\nu_{n,m}})^{\nu_{n,m}}$ where $\sum_n\sum_m \nu_{n,m}=1$ \cite{c24}. Using this inequality 
we find (\ref{right side constraint rewrite_P_M}), (\ref{right side constraint rewrite_P_F}) given below. 
\begin{figure*}[!htbp]
\begin{align}
&1+\frac{P_{M}'^{2}(g,f,t)}{\alpha}=\sum_{n=0}^{K}\sum_{m=0}^{K}\frac{\gamma^{2}_{n,m}}{\alpha}f^ng^m {\geq} \prod_{n=0}^{K} \prod_{m=0}^{K} (\frac{P_{M}'^{2}(g',f',t)}{\frac{\gamma^{2}_{n,m}}{\alpha}(f')^n(g')^m}\frac{\gamma^{2}_{n,m}}{\alpha}f^ng^m)
^{\frac{\frac{\gamma^{2}_{n,m}}{\alpha}(f')^n(g')^m}{P_{M}'^{2}(g',f',t)}}= \tilde{P}_M (g,f,g',f',t) \label{right side constraint rewrite_P_M}\\
&1+\frac{P_{F}'^{2}(g,f,t)}{\beta}=\sum_{n=0}^{K}\sum_{m=0}^{K}\frac{\delta^{2}_{n,m}}{\beta}f^ng^m {\geq} \prod_{n=0}^{K} \prod_{m=0}^{K} (\frac{P_{F}'^{2}(g',f',t)}{\frac{\delta^{2}_{n,m}}{\beta}(f')^n(g')^m}\frac{\delta^{2}_{n,m}}{\beta}f^ng^m)^{\frac{\frac{\delta^{2}_{n,m}}{\beta}(f')^n(g')^m}{P_{F}'^{2}(g',f',t)}}=\tilde{P}_F (g,f, g',f', t)\label{right side constraint rewrite_P_F}
\end{align}
\hrulefill
\end{figure*}
\begin{figure*}[!htbp]
\scriptsize
\begin{align}\label{feasibility_S_1}
\frac{\alpha P_{M}'^{1}(g_{i}^{*gp}, f_{i}^{*gp}, t)}{\alpha+P_{M}'^{2}(g_{i}^{*gp},f_{i}^{*gp},t)}\overset{(a)}{\leq}
\frac{P_{M}'^{1}(g_{i}^{*gp},f_{i}^{*gp},t )}{\tilde{P}_{M}(g_{i}^{*gp},f_{i}^{*gp}, g',f', t)}\overset{(b)}{\Rightarrow}
\frac{\alpha P_{M}'^{1}(g_{i}^{*gp}, f_{i}^{*gp}, t)}{\alpha+P_{M}'^{2}(g_{i}^{*gp},f_{i}^{*gp},t)}{\leq}\alpha
\Rightarrow P_{M}'^{1}(g_{i}^{*gp},f_{i}^{*gp},t)-P_{M}'^{2}(g_{i}^{*gp},f_{i}^{*gp},t)\leq\alpha ~ \overset{(c)}{\Rightarrow}P_{M}'(g_{i}^{*gp},f_{i}^{*gp},t)\leq \alpha
\end{align}
\normalsize
\hrulefill
\end{figure*}
%
%
%
Using  (\ref{right side constraint rewrite_P_M}), (\ref{right side constraint rewrite_P_F}) to replace the constraints in (\ref{ratio-constraint}),  we form the following problem, that is the GP approximation of $(\mathcal{S}''_1)$ and its feasible region contains that of $(\mathcal{S}''_1)$.
%
\begin{align}\label{S-ini}
&\mbox{given }g',f',t~~\min_{g,f}~~g P(x_k \in {\cal R}^{0}|\mathcal{H}_0)+ f P(x_k \in {\cal R}^{-1}|\mathcal{H}_0)~(\mathcal{S}^{''gp}_{1})\nonumber\\
&\text{s.t.}~~\frac{P_{M}'^{1}(g,f,t)}{\tilde{P}_{M}(g,f,g',f',t)}\leq \alpha,~~\frac{P_{F}'^{1}(g,f,t)}{\tilde{P}_{F}(g,f,g',f',t)}\leq \beta, \nonumber\\
& ~~0 \leq g \leq 1,~~0 \leq f \leq 1
\end{align}
%
Note that $(\mathcal{S}^{''gp}_{1})$ is GP and we can carry out an iterative procedure to solve  it numerically until it converges to a solution (i.e., the difference between the computed optimizers in two consecutive iterations becomes smaller than a pre-determined threshold). Suppose $g_{i}^{*gp},f_{i}^{*gp}$ are the solutions to $(\mathcal{S}^{''gp}_{1})$. 
We can establish (\ref{feasibility_S_1})
where $(a)$ follows from (\ref{right side constraint rewrite_P_M}), $(b)$ is obtained from the first constraint in $(\mathcal{S}^{''gp}_{1})$ and $(c)$ follows from the equality $P_{M}'=P_{M}'^{1}-P_{M}'^{2}$. We can similarly show $P_{F}'(g_{i}^{*gp},f_{i}^{*gp}, t) \leq \beta$. Using this inequality and the last inequality $P_{M}'(g_{i}^{*gp},f_{i}^{*gp},t)\leq \alpha$ in (\ref{feasibility_S_1}) one can easily verify that $g_{i}^{*gp},f_{i}^{*gp}$ is a feasible point in $(\mathcal{S}''_1)$. The solution $g_{i}^{*gp},f_{i}^{*gp}$ to which we converge depends on the very first chosen feasible point $g',f'$ and hence it is important to find a good starting point  $g'=g_{i_0}^{gp}, f'=f_{i_0}^{gp}$ when solving $(\mathcal{S}^{''gp}_{1})$.
A robust strategy to obtain a good starting point is to form another GP approximation of $(\mathcal{S}''_1)$, which we call $(\mathcal{S}^{ini})$, via approximating the constraints in (\ref{S1-S2-subproblems-randomized}) and replacing 
the terms $1-f$ and $1-g$ in $P'_M, P'_F$ expressions of (\ref{P'_M,P'_F})  with $\frac{1}{4f}$ and $\frac{1}{4g}$, respectively. Let $P''_M, P''_F$ denote the new expressions after these replacements. Since $1-x\leq\frac{1}{4x}$ for $x \in \mathbbm{R}^+$ we have 
$P_{M}''(g,f,t) \leq P_{M}'(g,f,t)\leq\alpha$ and $P_{F}''(g,f,t) \leq P_{F}'(g,f,t)\leq\beta$,
and therefore every feasible point in $(\mathcal{S}^{ini})$ is also a feasible point in $(\mathcal{S}''_1)$. With the new constraints $(\mathcal{S}^{ini})$ is GP and we can take an iterative approach to solve  $(\mathcal{S}^{ini})$ numerically until it converges to a solution $g_{i_0}^{gp},f_{i_0}^{gp}$. We let this solution be the very first starting point  $g',f'$ for solving $(\mathcal{S}^{''gp}_{1})$.
%
%
%
In summary, to tackle $(\mathcal{S}''_1)$ in (\ref{S1-S2-subproblems-randomized}), we find two GP approximations of $(\mathcal{S}''_1)$, namely $(\mathcal{S}^{''gp}_{1})$, $(\mathcal{S}^{ini})$. 
Solving $(\mathcal{S}^{ini})$ first provides us with a very good starting point for $(\mathcal{S}^{''gp}_{1})$. With the good starting point,  we solve $(\mathcal{S}^{''gp}_{1})$ to find $g_{i}^{*gp},f_{i}^{*gp}$. Having $g_{i}^{*gp},f_{i}^{*gp}$ we can now proceed to solve $(\mathcal{S}''_2)$ in (\ref{S1-S2-subproblems-randomized}).

$\bullet$ {\textbf{Solving $(\mathcal{S}''_2)$:}} With the solution obtained from solving $(\mathcal{S}''_1)$, we check whether $P'_{M}(g,f,t)=\alpha$, $P'_{F}(g,f,t)<\beta$, or $P'_{M}(g,f,t)<\alpha$, $P'_{F}(g,f,t)=\beta$. Similar to the method we conduct to solve $(\mathcal{O}''_2)$ in Section \ref{solve-O-CRT-I}, we adjust $t$ until the inequality constraint holds with equality  in the former case $P'_{F}(g,f,t)=\beta$, or  in the latter case $P'_{M}(g,f,t)=\alpha$.
We carry out an iterative procedure to iterate between solving $(\mathcal{S}^{ini}),(\mathcal{S}^{''gp}_{1})$  and solving $(\mathcal{S}''_2)$ until convergence is reached. Note that at each iteration the solution of $(\mathcal{S}^{''gp}_{1})$ is still a feasible point in $(\mathcal{S}''_1)$.
We refer to $g_i^*,f_i^*,t_i^*$ as the solutions corresponding to the convergence. 
%
%
Although in general $g_i^{opt} \neq g_i^*, f_i^{opt} \neq f_i^*,t_i^{opt} \neq t_i^*$, using a similar argument to Lemma \ref{lemma-one} of Section \ref{approach-i-solve-O}, we can show that if $0 \leq f^{*}_i<1$ and $0<g^{*}_i \leq 1$ then $ 0< f_i^{opt}<1$ and $0< g_i^{opt}< 1$. This result implies that CRT-I scheme is more effective than pure censoring scheme, i.e., it provides a lower transmission probability, under the same constraints on miss detection and false alarm probabilities.  
In Appendix \ref{appendix-proof-cor-random1-S} we show that when the same condition as in Corollary \ref{cor-random1} of Section \ref{approach-i-solve-O} holds we have $0 \leq f^{*}_i<1$ and $0<g^{*}_i \leq 1$. 

%
%
%

\subsection{CRT-II Scheme}\label{solve-S-random-ii}

Let start with $P_M,P_F$ expressions in (\ref{definition P_ND_opt_ii}). Different from Section \ref{solve-S-random-i}, $P_M,P_F$ are polynomials of $g,f$.
Consider $(\mathcal{S}''_1), (\mathcal{S}''_2)$ in (\ref{S1-S2-subproblems-randomized}), with $P_M,P_F$ in (\ref{definition P_ND_opt_ii}). Suppose $g^*_{ii}, f^*_{ii}, t^{*}_{ii}$ are the final solutions after solving $(\mathcal{S}''_1), (\mathcal{S}''_2)$. We find $g^*_{ii}, f^*_{ii}, t^{*}_{ii}$ using the same approach as we have explained in Section \ref{solve-S-random-i}, that is, 
we carry out an iterative procedure to iterate between solving $(\mathcal{S}^{ini}),(\mathcal{S}^{''gp}_{1})$  and solving $(\mathcal{S}''_2)$ until convergence is reached.
%
Although in general $g_{ii}^{opt} \neq g_{ii}^*, f_{ii}^{opt} \neq f_{ii}^*, t_{ii}^{opt} \neq t_{ii}^*$, using a similar argument to Lemma \ref{lemma-one} of Section \ref{approach-i-solve-O} we can also show that if $0 \leq f^{*}_{ii}<1$ and $0<g^{*}_{ii} \leq 1$ then $ 0< f^{opt}_{ii}<1$ and $0< g^{opt}_{ii}< 1$. This result implies that CRT-II scheme is more effective than pure censoring scheme, i.e., it provides a lower transmission probability, under the same constraints on miss detection and false alarm probabilities.  
Following a similar argument to Appendix  \ref{appendix-proof-cor-random1-S} we can show that when the same conditions as in  Corollary \ref{cor-random2} of Section \ref{solve-O-CRT-II} hold we have $0 \leq f^{*}_{ii}<1$ and $0<g^{*}_{ii} \leq 1$. 


\section{Numerical Results}\label{numerical results}
In this section, through Matlab simulations, we corroborate  
our analytical results in  sections \ref{min P_ND cons P_F} and \ref{min p_t cons P_ND} for solving problems $(\mathcal{O})$ and $(\mathcal{S})$ in (\ref{min_det}) for CRT-I and CRT-II schemes, and compare the performances of CRT-I and CRT-II schemes against that of pure censoring scheme. 
Considering our signal model in Section \ref{System_model}, we let $K\!=\!5$, $\sigma^2_{v}=-50$dBm, ${\cal A}=1$, and define communication SNR and sensing SNR (both in dB), denoted as SNR$_h$ and  SNR$_c$, respectively, where SNR$_h \!=\! 10 \log_{10} ( \frac{\sigma_{h}^2}{\sigma_v^2})$ and 
SNR$_c\!=\! 10 \log_{10} ( \frac{{\cal A }^2}{ \sigma_w^2})$.  In our simulations we vary SNR$_h$ and  SNR$_c$ by changing $\sigma_{h}^2$ and $\sigma_w^2$. 
We compare $P_M$ values achieved by CRT-I and CRT-II schemes against that of pure censoring scheme, as we vary different variables in our problem setup, including SNR$_h$, SNR$_c$, $p_0$ (maximum transmission probability), $\beta$ (largest tolerable $P_F$), and correlation coefficient $\rho$, and
investigate the conditions under which CRT-I and CRT-II schemes outperform pure censoring scheme.

%
\subsection{Performance Comparison when Solving Problem $(\mathcal{O})$}\label{numerical-O}
%
We start with solving $(\mathcal{O})$  in (\ref{min_det}) for pure censoring and obtain the local thresholds $\tau_1^d, \tau_2^d$ as well as the FC threshold $t$. The $P_M,P_F$ expressions for pure censoring scheme are found from (\ref{definition_P_ND_opt_i}) or (\ref{definition P_ND_opt_ii}) by letting $g=0,f=1$.
%
As mentioned in Section \ref{min P_ND cons P_F}, we use these obtained local thresholds when solving  $(\mathcal{O})$ for CRT schemes. 
%
%
To evaluate the performance of our proposed CRT-I scheme we consider two scenarios, which we refer to as ``CRT-I" and ``CRT-I with $f=1$ at FC" here. The values reported in the tables and figures  for  ``CRT-I with $f=1$ at FC" are based on our analytical results in Section \ref{solve-O-CRT-I}, where we solve $(\mathcal{O}''_1), (\mathcal{O}''_2)$ using $P_M',P_F'$  in (\ref{P'_M,P'_F}) and find $f^*,g^*,t^*$ as described in Section \ref{solve-O-CRT-I},  and then evaluate $P_M,P_F$ in (\ref{definition_P_ND_opt_i}) at $f^*,g^*,t^*$. On the other hand, the values reported for ``CRT-I" are obtained from solving sub-problems $(\mathcal{O}'_1), (\mathcal{O}'_2)$ in (\ref{define-O-prime}) using $P_M,P_F$  in (\ref{definition_P_ND_opt_i}). In the absence of analytical solution to sub-problems $(\mathcal{O}'_1), (\mathcal{O}'_2)$ in (\ref{define-O-prime}), we 
solve these sub-problems and find $f^*,g^*,t^*$ through numerical search, and then calculate the corresponding $P_M,P_F$ values. 
To evaluate the performance of our proposed CRT-II scheme we use our analytical results in Section \ref{solve-O-CRT-II}, where we solve $(\mathcal{O}'_1), (\mathcal{O}'_2)$ in (\ref{define-O-prime}) using $P_M,P_F$ in (\ref{definition P_ND_opt_ii}) and find $f^*,g^*,t^*$ as described in Section \ref{solve-O-CRT-II}, and then evaluate $P_M,P_F$ in  (\ref{definition P_ND_opt_ii}) at $f^*,g^*,t^*$. Due to space limitations, the values of $t^*$ are not listed in the tables.

%
%
%
$\bullet$ {\bf Performance Comparison when $p_0$ Varies}:
Table \ref{table_problem_O_rho_07_w/channel} on the left compares the performances of pure censoring, CRT-II, CRT-I with $f=1$ at FC, and CRT-I,  for SNR$_h=5$dB, SNR$_c=10$dB, $\beta=0.01$, $\rho=0.5$, when $p_0$ takes values of $p_0=0.4, 0.6, 0.8$.
Table \ref{table_problem_O_rho_07_w/channel} on the right compares the same, with the difference that $\rho=0.7$. 
Clearly, $P_M$ values in the tables indicate  that $P_M^{\tiny{\mbox{CRT-II}}}<P_M^{\tiny{\mbox{CRT-I}}} < P_M^{\tiny{\mbox{CRT-I~with~f=1~at~FC}}} < P_M^{\tiny{\mbox{pure~censoring}}}$ (exception is $p_0=0.8, \rho=0.5$), that is, CRT schemes outperform pure censoring scheme. This confirms that randomized transmission can improve detection performance under communication rate constraint, when sensors' observations conditioned on each hypothesis are dependent. As expected, performance of CRT-II (in which the FC makes use of information about sensor decision rules and knowledge of realizations $\{r_{g_k}, r_{f_k}\}_{k=1}^K$) is better than CRT-I and CRT-I with $f=1$ at FC (in which the FC does not have this information and does not know these realizations). Also, CRT-I outperforms CRT-I with $f=1$ at FC, since both CRT-I and CRT-II use the knowledge of $g,f$ in constructing the fusion rule, whereas CRT-I with $f=1$ at FC ignores the knowledge of $g,f$ in constructing the fusion rule. 
%
%
Examining $f^*$ values  we observe that $f^*$ increases and approaches to one (i.e., pure censoring scheme without randomized transmission) as  $p_0$ increases.
%
%
This observation can be explained as the following. 
%
As $p_0$ increases, the thresholds $\tau_1^d,\tau_2^d$ become closer to each other, such that the length of censoring interval $\mathcal{R}^0=[\tau_2^d,\tau_1^d]$ decreases. 
%
%
Consequently, the chances that all observations $x_k$'s fall in two consecutive intervals reduce, i.e., the chances that the condition in Corollary \ref{cor-random1} for CRT-I or the conditions in Corollary \ref{cor-random2} for CRT-II are satisfied decrease, and $f^*$ approaches one.
We also note that $f^{*}$ values for CRT-I with $f=1$ at FC is close to one. Particularly, when $p_0=0.8,\rho=0.5$ we have
%
%
$P_M^{\tiny{\mbox{CRT-I~with~f=1~at~FC}}}  \approx P_M^{\tiny{\mbox{pure~censoring}}}$. As correlation increases from $\rho=0.5$  to 
$\rho=0.7$ the value of  $f^{*}$ for CRT-I with $f=1$ at FC reduces and differs from one, and CRT-I with $f=1$ at FC starts to outperform pure censoring. 
%
%
%
%
%
Table \ref{table_problem_O_rho_05_wo/channel} compares the performances of pure censoring, CRT-II, CRT-I with $f=1$ at FC, and CRT-I,  for SNR$_h=10$dB, SNR$_c=10$dB, $\beta=0.01$, $\rho=0.5$, when $p_0$ takes values of $p_0=0.4, 0.6, 0.8$. Note that the simulation parameters are similar to those of Table \ref{table_problem_O_rho_07_w/channel},  with the difference that SNR$_h=10$dB. For Table  \ref{table_problem_O_rho_05_wo/channel}  we can make observations similar to those we made for Table \ref{table_problem_O_rho_07_w/channel}.

%
$\bullet$ {\bf  Performance Comparison  when $\rho$ Varies}: 
Table \ref{table_problem_O_rho_varies_w/channel} on the left compares the performances of pure censoring, CRT-II, CRT-I with $f=1$ at FC, and CRT-I,  for SNR$_h=5$dB, SNR$_c=10$dB, $\beta=0.01$, $p_0=0.4$, when $\rho$  takes values of $\rho=0.1, 0.3, 0.5, 0.7, 0.9$. Examining $P_M$ values we note that at low correlation $\rho=0.1$ pure censoring and CRT schemes perform closely. For $\rho > 0.1$, CRT schemes start to outperform  pure censoring, i.e., effect of randomized transmission on improving detection performance becomes more significant as $\rho$ increases. Comparing CRT schemes, Table \ref{table_problem_O_rho_varies_w/channel} on the left  suggests that $P_M^{\tiny{\mbox{CRT-II}}}<P_M^{\tiny{\mbox{CRT-I}}} < P_M^{\tiny{\mbox{CRT-I~with~f=1~at~FC}}}$ for all $\rho > 0.1$. 
%
%
%
Examining  $f^*$ values  we note that 
as $\rho$ increases $f^*$ value for CRT-I (exception  is $\rho=0.9$) and CRT-II decrease, indicating that the detection performance enhancement due to randomized transmission in CRT-I and CRT-II becomes more notable at higher correlation. 
For instance, at $\rho=0.5$, CRT-II and CRT-I improve upon pure censoring by $18$\%  and $12$\%, respectively. 
%
%
Table \ref{table_problem_O_rho_varies_w/channel} on the right considers the special case of $\rho=0$ and compares the performances of pure censoring, CRT-II, CRT-I with $f=1$ at FC, and CRT-I,  for SNR$_h=5$dB, SNR$_c=12$dB, $\beta=0.01$, when $p_0$ takes values of $p_0=0.4, 0.5, 0.8$.
%
%
%
%
For CRT-I scheme, we observe that as $p_0$ increases $f^{*}=1, g^*=0$ ($f^*,g^*$ remain unchanged) and $P_M$ values of  CRT-I and pure censoring schemes are similar. This is in agreement with Corollary 1, which states $f^* \neq 1$ for sufficiently large $\rho$. 
On the other hand, for CRT-II scheme, as $p_0$ increases $f^{*}$ approaches one, and at $p_0=0.8$,  $P_M$ values of  CRT-II and pure censoring schemes are similar.
%
This is consistent with Corollary \ref{cor-random2}, that states $f^* \neq 1$ when either $\rho$ is sufficiently large or $p_0$ is sufficiently small. 
%

$\bullet$ {\bf Effect of Correlation Mismatch}: The data in Table \ref{table_correlation_mismatch} explores the effect of incorrect correlation information (correlation mismatch) on the performance of pure censoring scheme for SNR$_h=5$dB, SNR$_c=10$dB, $\beta=0.01$, $p_0=0.4, 0.6, 0.8$, as the actual correlation $\rho$ value varies.
Correlation mismatch in our problem setup means that the fusion rule at FC ignores the actual correlation information and employs a fusion rule as if the sensors' observations are conditionally independent ($\rho=0$). 
%
%
%
Table \ref{table_correlation_mismatch} shows that, although the first constraint when solving ($\mathcal{O}$) is satisfied and the transmission probability $P_t$ is upper bounded by the given $p_0$ value, the second constraint in the problem (the constraint on false alarm probability $P_F$)  does not hold and all $P_F$ values exceed the largest tolerable $P_F$ (i.e., all $P_F$ values are larger than $\beta=0.01$).
%
%

%
$\bullet$ {\bf  Performance Comparison when SNR$_h$ Varies}: 
Fig. \ref{P_M-versus-channel-fig} shows $P_M$ versus SNR$_h$ when SNR$_c=10$dB, $\beta=0.01$, $p_0=0.4, \rho=0.5 $. 
This figure shows that for SNR$_h \leq 15$dB we have $P_M^{\tiny{\mbox{CRT-II}}}<P_M^{\tiny{\mbox{CRT-I}}} < P_M^{\tiny{\mbox{CRT-I~with~f=1~at~FC}}} < P_M^{\tiny{\mbox{pure~censoring}}}$, that is, CRT schemes outperform pure censoring and CRT-II provides the largest performance gain (with respect to pure censoring). The performance gain due to randomized transmission diminishes as SNR$_h$ exceeds  $15$dB and the performances of CRT schemes converge to that of pure censoring. 
For instance, at SNR$_h=10$dB, CRT-II and CRT-I improve upon pure censoring by $15$\%  and $8$\%, respectively. 
%
%
%

%
$\bullet$ {\bf Performance Comparison when $\beta$ Varies}: 
Fig. \ref{P_M-versus-P_f-fig} plots $P_M$ versus $\beta$ when SNR$_h=10$dB, SNR$_c=10$dB, $p_0=0.4, \rho=0.5 $. 
This figure shows that for $ \beta \leq 0.05$,  CRT schemes outperform pure censoring  and CRT-II provides the largest performance gain. The performance gain due to randomized transmission reduces   and the performances of CRT schemes converge to that of pure censoring scheme for $ \beta > 0.05$.  
For instance, at $\beta=0.05$, CRT-II and CRT-I improve upon pure censoring by $22$\%  and $15$\%, respectively. 
%
%

%
\subsection{Performance Comparison when Solving Problem $(\mathcal{S})$}
Similar to Section \ref{numerical-O}, to evaluate the performance of our proposed CRT-I scheme we consider two scenarios, which we refer to as ``CRT-I" and ``CRT-I with $f=1$ at FC" here. 
$\bullet$ {\bf  Performance Comparison when $\rho$ Varies}: 
Table \ref{table_problem_S_rho_varies_w/channel} on the left compares the performances of pure censoring, CRT-II, CRT-I with $f=1$ at FC, and CRT-I, for SNR$_h=5$dB, SNR$_c=10$dB, $\beta=0.01,\alpha=0.1$, when $\rho$  takes values of $\rho=0.1, 0.3, 0.5, 0.7, 0.9$. Examining $P_t$ values  we note that at low correlation $\rho=0.1$ pure censoring and CRT schemes perform closely. For $\rho > 0.1$, CRT schemes start to outperform  pure censoring, i.e., effect of randomized transmission on improving detection performance becomes more significant as $\rho$ increases. 
Comparing CRT schemes, Table \ref{table_problem_S_rho_varies_w/channel} on the left  suggests that $P_M^{\tiny{\mbox{CRT-II}}}<P_M^{\tiny{\mbox{CRT-I}}} < P_M^{\tiny{\mbox{CRT-I~with~f=1~at~FC}}}$ for all $\rho > 0.1$. 
Examining  $f^*$ values we note that 
as $\rho$ increases $f^*$ value for CRT-I and CRT-II decrease, indicating that the detection performance enhancement due to randomized transmission in CRT-I and CRT-II becomes more notable at higher correlation. 
For instance, at $\rho=0.5$, CRT-II and CRT-I improve upon pure censoring by $24$\%  and $13$\%, respectively. 
%
%
Table \ref{table_problem_S_rho_varies_w/channel} on the right considers the special case of $\rho=0$ and compares the performances of pure censoring, CRT-II, CRT-I with $f=1$ at FC, and CRT-I, for SNR$_h=5$dB, SNR$_c=12$dB, $\beta=0.01$, $\alpha=0.025$.
For CRT-I scheme, we observe that $f^{*}=1, g^*=0$  and $P_t$ values of CRT-I and pure censoring schemes are similar. 
%
%
This is in agreement with Corollary 1, which states $f^* \neq 1, g^*\neq 0$ for sufficiently large $\rho$. 
%
On the other hand, for CRT-II scheme, $f^{*} \neq 1, g^*\neq 0$ and $P_t$ value of CRT-II is smaller than that of pure censoring scheme. 
%
This is consistent with Corollary \ref{cor-random2}, that states $f^* \neq 1, g^*\neq 0$ when either $\rho$ is sufficiently large or $P_t$ is sufficiently small. 
%
%

$\bullet$ {\bf  Performance Comparison when SNR$_h$ Varies}: 
Fig. \ref{p_t-versus-channel-fig} shows $P_t$ versus SNR$_h$ when SNR$_c=10$dB$, \beta=0.01, \alpha=0.06, \rho=0.5$. This figure shows that for SNR$_h \leq 15$dB we have $P_t^{\tiny{\mbox{CRT-II}}}<P_t^{\tiny{\mbox{CRT-I}}} < P_t^{\tiny{\mbox{CRT-I~with~f=1~at~FC}}} < P_t^{\tiny{\mbox{pure~censoring}}}$, that is, CRT schemes outperform pure censoring scheme and CRT-II provides the largest performance gain. The performance gain due to randomized transmission diminishes as SNR$_h$ exceeds $15$dB and the performances of CRTs converge to that of pure censoring. 
For instance, at SNR$_h=8$dB, CRT-II and CRT-I improve upon pure censoring by $30$\%  and $25$\%, respectively. 
%

$\bullet$ {\bf Performance Comparison  when $\beta$ Varies}: 
Fig. \ref{p_t-versus-P_f-fig} plots $P_t$ versus $\beta$ when SNR$_h=10$dB, SNR$_c=10$dB, $\alpha=0.06, \rho=0.5$. 
This figure shows that for $ \beta \leq 0.03$,  CRT schemes outperform pure censoring scheme and CRT-II provides the largest performance gain. The performance gain due to randomized transmission reduces  and the performances of CRT schemes converge to that of pure censoring scheme for $ \beta > 0.03$.  
For instance, at $\beta=0.015$, CRT-II and CRT-I improve upon pure censoring by $19$\%  and $15$\%, respectively. 
\section{Conclusions}

Considering a binary distributed detection problem, where the FC is tasked with detecting a known signal in correlated Gaussian noises, we proposed two randomized transmission schemes (so-called CRT-I and CRT-II schemes). To investigate the effectiveness of these schemes to improve the system performance, under communication rate constraint,
we formulated and addressed two system-level constrained optimization problems
%
%
and proposed different optimization techniques  to solve these two problems. 
While independent randomization strategy cannot improve detection performance when sensors are restricted to transmit discrete values (over bandwidth constrained error-free channels), for conditionally independent observations
\cite{c21}, our results show that, for conditionally dependent observations,  
our simple and easy-to-implement CRT schemes can improve detection performance, when sensors transmit discrete values over noisy channels.
Through analysis and simulations, we explored and provided the conditions 
under which CRT schemes outperform pure censoring scheme, and illustrated the deteriorating effect of incorrect correlation information (correlation mismatch) on the detection performance.
%
%
When solving the first problem,  our numerical results indicate that CRT schemes outperform pure censoring scheme for SNR$_h \leq $15dB, $ 0.3 < \rho < 0.9$, $\beta \leq 0.05$. 
When solving the second problem,  our numerical results show that  CRT schemes outperform pure censoring scheme for SNR$_h \leq $15dB, $ 0.1 < \rho $, $\beta \leq 0.03$. 
%
%
Also, CRT-II scheme performs better than than CRT-I scheme, for instance, when solving the first (second) problem at $\rho=0.5$, CRT-II and CRT-I improve upon pure censoring by $18$\%  ($35$\%) and $12$\%($17$\%), respectively. 


\appendix


\subsection{Proof of Lemma \ref{lemma-one}}\label{appendix-proof-of-lemma1}
Since $t^d$ is the solution to problem $(\mathcal{O})$ in (\ref{min_det})  for pure censoring scheme it is a feasible point of $(\mathcal{O})$. Also,  $(\mathcal{O})$  is equivalent to problem $(\mathcal{O}')$ when $f=1$. Therefore, the constraints are satisfied $P_F(t^d,1)= P'_F(t^d,1)\leq \beta$, $P_t=p_0$ and $P_{M}(t^d,1)=P'_{M}(t^d,1)$. For the moment, assume $t_i^{opt}\neq t^{d}, 0<f_i^* <1, f_i^{opt}=1.$
Now, consider $(\mathcal{O}''_1)$ given $t=t^d$. For $f=1$ all the constraints are satisfied and therefore $f=1$ is a feasible point of $(\mathcal{O}''_1)$.  Under the assumption $0<f_i^* < 1, f_i^{opt}=1$ and using the above argument we have
\begin{equation}\label{contradict_1}
P'_{M}(t^d,f^{*}_i)<P'_{M}(t^d,1)=P_{M}(t^d,1)
\end{equation}
On the other hand, we have
$P_{M}(t_i^{opt},1)<P_{M}(t^d, 1)$, contradicting the fact that $t^d$ is the solution of problem $(\mathcal{O})$. This implies our assumption above cannot be true and if $f_i^{opt}=1$  then we must have $t_i^{opt}= t^{d}$. 

So, let instead assume $t_i^{opt}=t^d, 0<f_i^* < 1, f_i^{opt}=1$.
Recall $(t^d,f^{*}_i)$ is a feasible point in $(\mathcal{O}')$ since the constraints are satisfied. Hence $P_{M}(t^d,1)<P_{M}(t^d,f^{*}_i)$. On the other hand, we know $P_{M}(t^d,f^{*}_i)<P'_{M}(t^d,f^{*}_i)$. Combining the last two inequalities we reach
$P_{M}(t^d,1)<P'_{M}(t^d,f^{*}_i)$.
However, the latest inequality contradicts (\ref{contradict_1}). This proves our assumptions cannot be true and if $0<f^{*}_i<1$ then $0<f_i^{opt}<1$.


\subsection{Proof of Theorem \ref{dP_M_df_thm_opt_i}}\label{appendix-proof-dP_M_df_thm_opt_i}

 We first consider $P_M'$ in (\ref{P'_M,P'_F}), where  $g$ is a function of $f$. We find that $\frac{d P'_M(t^d,f)}{df}|_{f=1}=\sum_{a_1}\frac{d {\cal M}(t^d,f,a_1)}{df}|_{f=1}$ where ${\cal M}(t^d,f,a_1)$ is given in (\ref{dP_M_df_L_R_opt_ii_reduced}).
\begin{figure*}[!htbp]
\begin{align}
&{\cal M}(t^d,f,a_1)=\frac{1}{a_1!}\sum_{a_2,a_3,a_4,a_5}
\frac{K!(1-P_{u}(1,0, t^d,\boldsymbol{a})) P_{x1}(\boldsymbol{a})}{ a_2! a_3! a_4! a_5!}
(1-g(f))^{a_2} g(f)^{a_3} (1-f)^{a_4} f^{a_5}\label{dP_M_df_L_R_opt_ii_reduced}\\
&\frac{d {\cal M}(t^d,f,a_1)}{df}|_{f=1}
=\sum_{a_5}\frac{K!(P_{u}(1,0,t^d,\boldsymbol{a}^L_{10})-P_{u}(1,0,t^d,\boldsymbol{a}^L_{01}))}{a_1! (K-a_1-a_5-1)!a_5!}P_{x1}(\boldsymbol{a}^L_{00})(-\gamma^L_M+(\frac{-dg}{df}))\label{dP_ND_df_L_d}\\
&
\frac{d {\cal M'}(t^d,f,a_1)}{df}|_{f=1}=\sum_{a_5}\frac{K!(P_{u}(1,0,t^d,\boldsymbol{a}^L_{01})-P_{u}(1,0,t^d,\boldsymbol{a}^L_{10}))}{a_1! (K-a_1-a_5-1)!a_5!}P_{x0}(\boldsymbol{a}^L_{00})(-\gamma_F^L+(\frac{-dg}{df}))\label{dPF-df}
\end{align}
\end{figure*}
To express $\frac{d {\cal M}(t^d,f,a_1)}{df}|_{f=1}$ we use the definition of vector 
$\boldsymbol{a}=[a_1,\ldots,a_5]$ in Section \ref{derive PF_PM_opt_i} to define the vectors
$\boldsymbol{a}^L_{01}\!=\![a_1,K{-a_1}-a_5-1,0,1,a_5]$, $\boldsymbol{a}^L_{00}\!=\![a_1,K{-a_1}-a_5,0,0,a_5]$, $\boldsymbol{a}^L_{10}\!=\![a_1,K{ -a_1}-a_5-1,1,0,a_5]$. 
Taking the derivative $\frac{d {\cal M}(t^d,f,a_1)}{df}$ and noting that for $f=1$ all the terms containing $(1-f)^{a_4},~a_4>0$ or $g(f)^{a_3},~a_3>0$ are zero,
the facts that
%
%
$P_{u}(1,0,t^d,\boldsymbol{a}^L_{01})\!=\!P_{u}(1,0,t^d,\boldsymbol{a}^L_{00})$ and $P_{x1}(\boldsymbol{a}^L_{10})\!=\!P_{x1}(\boldsymbol{a}^L_{01})\!=\!P_{x1}(\boldsymbol{a}^L_{00})$, after some algebraic simplifications we obtain (\ref{dP_ND_df_L_d}),
where $\gamma_M^L$ in (\ref{dP_ND_df_L_d}) is defined as below
\begin{align}\label{definition_gamma_L_opt_i}
\gamma_M^L=\frac{\sum_{a_5}\frac{K!(P_{u}(1,0,t^d,\boldsymbol{a}^L_{10})-P_{u}(1,0,t^d,\boldsymbol{a}^L_{01}))}{a_1! (K-a_1-a_5-1)! a_5!}P_{x1}(\boldsymbol{a}^L_{01})}
{\sum_{a_5}\frac{K!(P_{u}(1,0,t^d,\boldsymbol{a}^L_{10})-P_{u}(1,0,t^d,\boldsymbol{a}^L_{01}))}{a_1! (K-a_1-a_5-1)! a_5!}P_{x1}(\boldsymbol{a}^L_{00})}
\end{align}
Now, suppose $\rho$ is sufficiently large such that  $x_k, \forall k$ fall in two consecutive intervals. 
Considering the definition of $\boldsymbol{a}^L_{01} $ we realize that $x_k \in {\cal R}^{-1} \cup {\cal R}^0$,  $x_k \notin {\cal R}^{1}$, implying that $a_1=0$.  
%
Therefore, $P_{x1}(\boldsymbol{a}^L_{00}) \approx 0$ and $\frac{d {\cal M}(t^d,f,a_1)}{df}|_{f=1} \approx 0$ thus $\frac{dP'_M(t^d,f)}{df}|_{f=1} \approx 0$.
%

Next, we consider $ P_F'$ in (\ref{P'_M,P'_F}).  Taking similar steps as above, we find that $\frac{d P'_F(t^d,f)}{df}|_{f=1}=\sum_{a_1}\frac{d {\cal M'}(t^d,f,a_1)}{df}|_{f=1}$ where $\frac{d {\cal M'}(t^d,f,a_1)}{df}|_{f=1}$ is given in (\ref{dPF-df}) and
$\gamma_F^L$ in (\ref{dPF-df}) is obtained by replacing $P_{x1}(\boldsymbol{a}^L_{01})$, $P_{x1}(\boldsymbol{a}^L_{00})$ in (\ref{definition_gamma_L_opt_i}) with $P_{x0}(\boldsymbol{a}^L_{01})$, $P_{x0}(\boldsymbol{a}^L_{00})$,  respectively.
For $\rho\!=\!0$ 
we have $\frac{P_{x0}(\boldsymbol{a}^L_{01})}{P_{x0}(\boldsymbol{a}^L_{00})}=\frac{P(x_k \in {\cal R}^{-1}|\mathcal{H}_0)}{P(x_k \in {\cal R}^{0}|\mathcal{H}_0)}<\frac{P(x_k \in {\cal R}^{-1}\bigcup{\cal R}^{1}|\mathcal{H}_0)}{P(x_k \in {\cal R}^{0}|\mathcal{H}_0)}=\frac{p_0}{1-p_0}$. As $\rho$ increases,  the ratio $\frac{P_{x0}(\boldsymbol{a}^L_{01})}{P_{x0}(\boldsymbol{a}^L_{00})}$ decreases and thus  $\frac{P_{x0}(\boldsymbol{a}^L_{01})}{P_{x0}(\boldsymbol{a}^L_{00})}<\frac{p_0}{1-p_0}$ for every $\rho$. From the definition of $\gamma_F^L$ and using mediant inequality we conclude that $\gamma_F^L<\frac{p_0}{1-p_0}=\frac{-dg}{df}$.
Combining this with the fact that $P_{u}(1,0,t^d,\boldsymbol{a}^L_{01})>P_{u}(1,0,t^d,\boldsymbol{a}^L_{10})$ we find $\frac{d {\cal M'}(t^d,f,a_1)}{df}|_{f=1}>0$ and consequently $\frac{dP'_F(t^d,f)}{df}|_{f=1}>0$ for every $\rho$. This completes our proof of Theorem 1.


\subsection{Proof of Theorem \ref{dP_F_df_thm_opt_ii}}\label{appendix-proof-dP_F_df_thm_opt_ii}
We consider $P_M, P_F$ in (\ref{definition P_ND_opt_ii}), where $g$ is a function of $f$. Since the local thresholds $\tau_1,\tau_2$ are fixed at $\tau_1^d,\tau_2^d$, we first simplify the notations by dropping them from the terms $P_{u_m}(\tau_1,\tau_2, t, \boldsymbol{a})$ and $P_{xm} (\tau_1,\tau_2,  \boldsymbol{a})$ and denoting these probabilities as $P_{u_m}(t,\boldsymbol{a})$ and $P_{xm} (\boldsymbol{a})$, respectively. We obtain $\frac{dP_M(t^d,f)}{df}|_{f=1}$ given in (\ref{theorem-2-first-equation}).
\begin{figure*}[!htbp]
\hrulefill
\begin{align}
&\frac{dP_M(t^d,f)}{df}|_{f=1}=\sum_{a_{1_1},a_{1_0},...,a_{6_1},a_{6_0}} \frac{K!P_{u_0}(t^d,\boldsymbol{a}) P_{x1}(\boldsymbol{a})}{a_{1_1}!a_{1_0}!...a_{6_1}!a_{6_0}!} \frac{d(M(f,a_{2_1},a_{2_0},\ldots,a_{5_1},a_{5_0}) M'(f,a_{1_1},a_{1_0},a_{6_1},a_{6_0}))}{df}|_{f=1} \label{theorem-2-first-equation}\\
&\mbox{where}~M'(f,a_{1_1},a_{1_0},a_{6_1},a_{6_0})=f^{a_{1_1}+a_{1_0}}(1-f)^{a_{6_1}+a_{6_0}}g(f)^{a_{1_1}+a_{6_1}}(1-g(f))^{a_{1_0}+a_{6_0}} \nonumber \\
& M(f,a_{2_1},a_{2_0},...,a_{5_1},a_{5_0})=f^{a_{2_1}+a_{3_1}+a_{5_1}+a_{5_0}}(1-f)^{a_{2_0}+a_{3_0}+a_{4_1}+a_{4_0}}g(f)^{a_{3_1}+a_{3_0}+ a_{4_1}+a_{5_1}}(1-g(f))^{a_{2_1}+a_{2_0}+a_{4_0}+a_{5_0}} \nonumber \\
&\frac{dP_M(t^d,f)}{df}|_{f=1}=
\sum_{a_{1_0}}\frac{d {\cal M}(t^d,f,a_{1_0})}{df}|_{f=1}+
\sum_{a_{2_1}}\sum_{a_{5_0}}\frac{d {\cal M}'(t^d,f,a_{2_1},a_{5_0})}{df}|_{f=1}  \label{theorem-2-second-equation}\\
&\mbox{where}~{\cal M}(t^d,f,a_{1_0})=\frac{1}{a_{1_0}!}\sum_{a_{2_1}, a_{2_0}, ... , a_{5_1}, a_{5_0} }
\frac{K!}{a_{2_1}! a_{2_0}!...a_{5_1}! a_{5_0}!}P_{u_0}(t^d,\boldsymbol{a})P_{x1}(\boldsymbol{a})
M(f,a_{2_1},a_{2_0}\ldots,a_{5_1},a_{5_0})\nonumber\\
&\mbox{and}~{\cal M}'(t^d,f,a_{2_1},a_{5_0})=\frac{1}{a_{2_1}!a_{5_0}!} \sum_{a_{1_1}, a_{1_0}, a_{6_1}, a_{6_0}} 
\frac{K!}{a_{1_1}!a_{1_0}!a_{6_1}!a_{6_0}!}P_{u_0}(t^d,\boldsymbol{a})P_{x1}(\boldsymbol{a})
M'(f,a_{1_1},a_{1_0},a_{6_1},a_{6_0})\nonumber\\
&\frac{d {\cal M}(t^d,f,a_{1_0})}{df}|_{f=1}=\frac{1}{a_{1_0}!}\sum_{a_{2_1}, a_{5_0}}\frac{K!(P(u_0=0|\boldsymbol{C}=\boldsymbol{C}^{\boldsymbol{a}_{01}})-
P(u_0=0|\boldsymbol{C}=\boldsymbol{C}^{\boldsymbol{a}_1}))}{a_{2_1}!a_{5_0}!}
P_{x1}(\boldsymbol{a}_{01})\nonumber\\
&- \frac{dg}{df}\frac{1}{a_{1_0}!}\sum_{a_{2_1}, a_{5_0}}\frac{K!(P(u_0=0|\boldsymbol{C}=\boldsymbol{C}^{\boldsymbol{a}_{01}})-
	P(u_0=0|\boldsymbol{C}=\boldsymbol{C}^{\boldsymbol{a}_2}))}{a_{2_1}!a_{5_0}!}
P_{x1}(\boldsymbol{a}_{01})\nonumber\\
&+ \frac{1}{a_{1_0}!}\sum_{a_{2_1}, a_{5_0}}\frac{K!(P(u_0=0|\boldsymbol{C}=\boldsymbol{C}^{\boldsymbol{a}_{10}})-
	P(u_0=0|\boldsymbol{C}=\boldsymbol{C}^{\boldsymbol{a}_3}))}{a_{2_1}!a_{5_0}!}
P_{x1}(\boldsymbol{a}_{10}) - \frac{dg}{df} \frac{1}{a_{1_0}!} \sum_{a_{2_1}, a_{5_0}}\frac{K!}{a_{2_1}!a_{5_0}!}
P_{x1}(\boldsymbol{a}_{10})\label{expanded-dMf}\\
&\frac{d {\cal M}'(t^d,f,a_{2_1},a_{5_0})}{df}|_{f=1}=\frac{K!}{(K'-1)!a_{2_1}!\ldots a_{5_0}!} \big( P(u_0=0|\boldsymbol{C}=\boldsymbol{C}^{\boldsymbol{a}'_{00}})P_{x1}(\boldsymbol{a}'_{00})-
P(u_0=0|\boldsymbol{C}=\boldsymbol{C}^{\boldsymbol{a}'_2})P_{x1}(\boldsymbol{a}'_2))\nonumber\\
&+ \frac{dg}{df}\frac{K!\big( P(u_0=0|\boldsymbol{C}=\boldsymbol{C}^{\boldsymbol{a}'_1})P_{x1}(\boldsymbol{a}'_1)-
	P(u_0=0|\boldsymbol{C}=\boldsymbol{C}^{\boldsymbol{a}'_{00}})P_{x1}(\boldsymbol{a}'_{00})\big)}{(K'-1)!a_{2_1}!\ldots a_{5_0}!} \label{dM'-t-f}
\end{align}
\hrulefill
\end{figure*}
The reasoning for such a partitioning is that $x_k \in \mathcal{R}^{1}$ for $k \in {\cal K}_{1_1}\cup {\cal K}_{1_0}\cup {\cal K}_{6_1}\cup {\cal K}_{6_0}$ and $x_k \notin \mathcal{R}^{1}$ for $k \in {\cal K}_{2_1}\cup {\cal K}_{2_0}\cup {\cal K}_{3_1}\cup {\cal K}_{3_0} {\cal K}_{4_1}\cup {\cal K}_{4_0} {\cal K}_{5_1}\cup {\cal K}_{5_0}$ according to the definitions in Section \ref{derive PF_PM_opt_ii}. To continue our derivations,  we use the facts that $M'(f,a_{1_1},a_{1_0},a_{6_1},a_{6_0})|_{f=1}=0$ if at least one of $a_{1_1},a_{6_1},a_{6_0}$ is non-zero and  $M'(f,0,a_{1_0},0,0)|_{f=1}=1$. Similarly, $M(f,a_{2_1},a_{2_0},...,a_{5_1}, a_{5_0})|_{f=1}=0$ if at least one of $a_{2_0},a_{3_1},a_{3_0},a_{4_1},a_{4_0},a_{5_1}$ is non-zero and $M(f,a_{2_1},0,0,0,0,0,0,a_{5_0})|_{f=1}=1$. 
Therefore, we reach (\ref{theorem-2-second-equation}). 
In the following, we argue that  both terms in (\ref{theorem-2-second-equation}) are greater than zero and hence $\frac{dP_M(t^d,f)}{df}|_{f=1}>0$ in (\ref{theorem-2-second-equation}). 

Let consider the first term in (\ref{theorem-2-second-equation}). To express $\frac{d {\cal M}(t^d,f,a_{1_0})}{df}|_{f=1}$ 
we use the definition of vector 
$\boldsymbol{a}=[a_{1_1},a_{1_0},\ldots,a_{6_1},a_{6_0}]$ in Section  \ref{derive PF_PM_opt_ii} to define the vectors 
$\boldsymbol{a}_{00}=[0,l_1,l_2,0,\ldots,0,l_3,0,0]$, that is, $a_{1_0}=l_1,a_{2_1}=l_2, a_{5_0}=l_3$, and the remaining entries are zero, 
 $\boldsymbol{a}_{01}=[0,l_1,l_2+1,0,\ldots,0,l_3,0,0]$,  $\boldsymbol{a}_{10}=[0,l_1,l_2,0,\ldots,0,l_3+1,0,0]$, $\boldsymbol{a}_{1}=[0,l_1,l_2,1,\ldots,0,l_3,0,0]$, $\boldsymbol{a}_{2}=[0,l_1,l_2,0,1,0,\ldots,0,l_3,0,0]$, $\boldsymbol{a}_{3}=[0,l_1,l_2,0,\ldots,0,1,0,l_3,0,0]$ and $\boldsymbol{a}_{4}=[0,l_1,l_2,0,\ldots,0,0,1,l_3,0,0]$.
 We also note that from (\ref{definition P_x_opt_i}) we have $P_{x1}(\boldsymbol{a}_{01})=P_{x1}(\boldsymbol{a}_{1})=P_{x1}(\boldsymbol{a}_{2})$ and $P_{x1}(\boldsymbol{a}_{10})=P_{x1}(\boldsymbol{a}_{3})=P_{x1}(\boldsymbol{a}_{4})$.
Taking the derivative $\frac{d {\cal M}(t^d,f,a_{1_0})}{df}$ and taking into account the terms that become zero for $f=1$ we obtain (\ref{expanded-dMf}). 
Next, we argue that, under the stated conditions in Theorem 2,  the first term in (\ref{theorem-2-second-equation}), which is expanded in (\ref{expanded-dMf}), is greater than zero.
Suppose $P_t=p_0$ is sufficiently small such that ${\tau}^d_2<0$. Considering the definition of $\boldsymbol{a}_{10}$ there exists at least one $x_k \in {\cal R}^{-1}$, also  ${\tau}^d_2<0$ and $P(x \in {\cal R}^{-1}|\mathcal{H}_1)\approx 0$. The combination of these implies that
$P_{x1}(\boldsymbol{a}_{10})\approx 0$. This approximation indicates that the third and forth terms in (\ref{expanded-dMf}) are approximately zero.
On the other hand, one can show that $P(u_0=0|\boldsymbol{C}=\boldsymbol{C}^{\boldsymbol{a}_{01}})-
P(u_0=0|\boldsymbol{C}=\boldsymbol{C}^{\boldsymbol{a}_1}) \approx 0$ and $P(u_0=0|\boldsymbol{C}=\boldsymbol{C}^{\boldsymbol{a}_{01}})-
P(u_0=0|\boldsymbol{C}=\boldsymbol{C}^{\boldsymbol{a}_2}) > 0$. 
The former approximation suggests that the first term in (\ref{expanded-dMf}) is approximately zero and the latter inequality implies that the second term in (\ref{expanded-dMf}) is greater than zero. By combining all these we conclude that the first term in (\ref{theorem-2-second-equation}), which is expanded in (\ref{expanded-dMf}), is greater than zero.

Next, we focus on the second term in (\ref{theorem-2-second-equation}). To express $\frac{d {\cal M}'(t^d,f,a_{2_1},a_{5_0})}{df}|_{f=1}$ we denote $K'=K-(a_{2_1}+a_{2_0}+ a_{3_1}+a_{3_0} + a_{4_1}+a_{4_0} +a_{5_1} + a_{5_0})$, 
and define the vectors  $\boldsymbol{a}'_{00}=[0,K',a_{2_1},\ldots,a_{5_0},0,0]$, $\boldsymbol{a}'_1=[1,K'-1,a_{2_1},\ldots,a_{5_0},0,0]$ and $\boldsymbol{a}'_2=[0,K'-1,a_{2_1},\ldots,a_{5_0},1,0]$.  We also note that from (\ref{definition P_x_opt_i}) we have $P_{x1}(\boldsymbol{a}'_{00})=P_{x1}(\boldsymbol{a}'_1)=P_{x1}(\boldsymbol{a}'_{2})$.
Taking the derivative  $\frac{d {\cal M}'(t^d,f,a_{2_1},a_{5_0})}{df}$ and taking into account the terms that become zero for $f=1$ we obtain (\ref{dM'-t-f}).
By taking similar steps to the ones taken for the first term in  (\ref{theorem-2-second-equation}), 
one can show that, under the stated conditions in Theorem 2, we have $P(u_0=0|\boldsymbol{C}=\boldsymbol{C}^{\boldsymbol{a}'_1})P_{x1}(\boldsymbol{a}'_1) - P(u_0=0|\boldsymbol{C}=\boldsymbol{C}^{\boldsymbol{a}'_{00}})P_{x1}(\boldsymbol{a}'_{00}) \approx 0$ and 
$P(u_0=0|\boldsymbol{C}=\boldsymbol{C}^{\boldsymbol{a}'_{00}})P_{x1}(\boldsymbol{a}'_{00}) - P(u_0=0|\boldsymbol{C}=\boldsymbol{C}^{\boldsymbol{a}'_2})P_{x1}(\boldsymbol{a}'_2) >0$. 
The former approximation implies that the first term in (\ref{dM'-t-f}) is approximately zero and the latter inequality indicates  that the second term in (\ref{dM'-t-f}) is greater than zero. Combining all these suggests that the second term in (\ref{theorem-2-second-equation}), which is expanded in (\ref{dM'-t-f}), is greater than zero.

In summary, we have shown that both terms in (\ref{theorem-2-second-equation}) are greater than zero. Therefore $\frac{dP_M(t^d,f)}{df}|_{f=1}>0$. 
Considering $P_F$ in (\ref{definition P_ND_opt_ii}) we can show in a similar way (with some change of notations) that $\frac{dP_F(t^d,f)}{df}|_{f=1}>0$ under the stated conditions in Theorem 2. Due to lack of space and to avoid repetition, this part is omitted. This completes our proof of Theorem 2.


\subsection{Proof of Corollary \ref{cor-random2}}\label{proof-of-corollary2}

Here, we first prove that under condition $(a)$ we have $0<f^*_{ii}<1$.  Note that at $f=1$, $P_M$ of CRT-I and CRT-II schemes have the same values. The same statement is true for $P_F$ values at $f=1$. Also, note $P_F(t^{*}_{i},f^{*}_{i})=P'_F(t^{*}_{ii},f^{*}_{ii})=\beta$. On the other hand, we know $P_M(t^{*}_{ii},f^{*}_{ii})<P'_M(t^{*}_{i},f^{*}_{i})$ since the amount of information available at the FC for CRT-II is greater than that of CRT-I. Also, from Corollary \ref{cor-random1} we know if condition $(a)$ holds, then $P'_M(t^{*}_i,f^{*}_i)<P'_M(t,1)=P_M(t,1)$. Combining all, we reach $P_M(t^{*}_{ii},f^{*}_{ii})<P_M(t,1)$ implying that $0<f^{*}_{ii}<1$. 

Next, we show that under condition $(b)$ we have $0<f^*_{ii}<1$.
The KKT conditions that need to be solved to find $f^{*}_{ii}$ are similar to the ones in (\ref{KKT-O3-1})-(\ref{KKT-O3-3}), where $P'_M,P'_F$ are replaced with $P_M,P_F$. Therefore, our argument here is similar to the proof of Corollary \ref{cor-random1}. 
For $f=1$ from (\ref{KKT-O3-2}), (\ref{KKT-O3-3}) we have $\mu_1 >0, \mu_2=0, \lambda \geq 0$. Also, from Theorem \ref{dP_F_df_thm_opt_ii} we have $\frac{dP_{M}(t^d,f)}{df}|_{f=1}>0$ and $\frac{dP_{F}(t^d,f)}{df}|_{f=1}>0$ under condition (b).  Now, considering (\ref{KKT-O3-1}) we realize that it cannot be satisfied at $f=1$, 
and hence we have $0<f^{*}_{ii}<1$.

\subsection{Proof of $0 \leq f^{*}_i<1$ and $0<g^{*}_i \leq 1$ under the Condition in Corollary \ref{cor-random1} when Solving Problem $(\mathcal{S}''_1)$ for CRT-I Scheme}
\label{appendix-proof-cor-random1-S}

%
Suppose $p_{\text{min}}$ denote the minimum value that the cost function in problem $(\mathcal{S}''_1)$  can achieve, i.e., $g^{*}_iP(x_k \in {\cal R}^0|{\cal H}_0)+f^{*}_iP(x_k \in {\cal R}^{-1}|{\cal H}_0)=p_{\text{min}}$. From this equality we find $g$ in terms of $f$, that is, $g(f)=\frac{p_{\text{min}}-fP(x_k \in {\cal R}^{-1}|{\cal H}_0)}{P(x_k \in {\cal R}^0|{\cal H}_0)}$. 
Given $t$, let $\mathfrak{L}(f,g,\lambda_1,\lambda_1,\mu_1,\mu_2,\mu_3,\mu_4)$ be the Lagrangian for $(\mathcal{S}''_1)$, where $\lambda_1$, $\lambda_2$, $\mu_1$, $\mu_2$, $\mu_3$, $\mu_4$ are Lagrange multipliers for the constraints $P'_{M}(g,f,t)\leq \alpha$, $P'_{F}(g,f,t)\leq \beta$, $f\leq 1$, $f\geq0$, $g\leq1$, $g\geq 0$, respectively.
The associated KKT conditions are given in (\ref{KKT-O4-1})-(\ref{KKT-O4-5}).
\begin{figure*}[!htbp]
\begin{align}
&\frac{d\mathfrak{L}}{df}=P(x_k \in {\cal R}^{-1}|{\cal H}_0)+\lambda_1\frac{dP'_{M}(g,f,t)}{df}+\lambda_2\frac{dP'_{F}(g,f,t)}{df}+\mu_1-\mu_2=0\label{KKT-O4-1}\\
&\frac{d\mathfrak{L}}{dg}=P(x_k \in {\cal R}^{0}|{\cal H}_0)+\lambda_1\frac{dP'_{M}(g,f,t)}{dg}+\lambda_2\frac{dP'_{F}(g,f,t)}{dg}+\mu_3-\mu_4=0\label{KKT-O4-2}\\
&\lambda_1(P'_{M}(g,f,t)-\alpha)=0,~P'_{M}(g,f,t)\leq\alpha,~\lambda_1 \geq 0,~ \lambda_2(P'_{F}(g,f,t)-\alpha)=0,~P'_{F}(g,f,t)\leq\beta,~\lambda_2\geq0\label{KKT-O4-4}\\
&\mu_1(f-1)=0,~f\leq1,~\mu_1\geq0,~~
\mu_2f=0,~f\geq 0,~\mu_2\geq0,~ \mu_3(g-1)=0,~g\leq1,~\mu_3\geq0,~~
\mu_4g=0,~g\geq 0,~\mu_4\geq0\label{KKT-O4-5}\\
&\frac{d\mathfrak{L}}{df}|_{f=1,g=0} +  k  \frac{d\mathfrak{L}}{dg}|_{f=1,g=0} =P(x_k \in {\cal R}^{-1}|{\cal H}_0)+ \lambda_1\frac{dP'_{M}(g,f,t)}{df}|_{f=1,g=0} +\lambda_2\frac{dP'_{F}(g,f,t)}{df}|_{f=1,g=0}+\mu_1 \nonumber\\
&  + k P(x_k \in {\cal R}^0|{\cal H}_0)  + k \lambda_1\frac{dP'_{M}(g,f,t)}{dg}|_{f=1,g=0}  + k \lambda_2\frac{dP'_{F}(g,f,t)}{df}|_{f=1,g=0}  - k \mu_4 \nonumber\\& =\lambda_1 (1+k) \frac{dP'_{M}(g,f,t)}{df}|_{f=1}+\lambda_2 (1+k) \frac{dP'_{F}(g,f,t)}{df}|_{f=1}+\mu_1+\frac{P(x_k \in {\cal R}^{-1}|{\cal H}_0)}{P(x_k \in {\cal R}^0|{\cal H}_0)}\mu_4\overset{(a)}{>}0 \label{last-equation-appendix-c}
\end{align}
\hrulefill
\end{figure*}
Considering the KKT conditions and the results of Theorem \ref{dP_M_df_thm_opt_i} (which states that given $t$ we have $\frac{dP'_{M}(g(f), f,t)}{df}|_{f=1} \approx 0$ and $
\frac{dP'_{F}(g(f), f,t)}{df}|_{f=1}>0$), next we show\footnote{Note that although we find $f_i^*, g_i^*$ differently in Section \ref{solve-S-random-i}, they still satisfy the KKT conditions in (\ref{KKT-O4-1})-(\ref{KKT-O4-5}) and we use this fact to show that $0 \leq f^{*}_i<1$ and $0<g^{*}_i \leq 1$ under the stated condition in Corollary \ref{cor-random1}.
} that $0 \leq f^{*}_i<1$ and $0<g^{*}_i \leq 1$ under the stated condition in Corollary \ref{cor-random1}. 
For the moment, suppose $f^{*}_i=1, g^{*}_i=0$. From (\ref{KKT-O4-5}), we have $\mu_1> 0$, $\mu_2=0$, $\mu_3=0$, $\mu_4> 0$. 
Also, from the earlier definition of $g(f)$ we have $g(f^{*}_i)=g^{*}_i$, i.e, $g(1)=0$. Furthermore, we have $\frac{dg(f)}{df}=-\frac{P(x_k \in {\cal R}^{-1}|{\cal H}_0)}{P(x_k \in {\cal R}^0|{\cal H}_0)}$, which is fixed (independent of  $g,f$). Let $k=-\frac{P(x_k \in {\cal R}^{-1}|{\cal H}_0)}{P(x_k \in {\cal R}^0|{\cal H}_0)}$ be the fixed ratio where $-1 < k <0$. 
Now, using (\ref{KKT-O4-1}) and (\ref{KKT-O4-2}),  we write (\ref{last-equation-appendix-c})
where $(a)$ follows from  $\lambda_1,\lambda_2,\mu_1,\mu_4, 1+k >0$ and  the results of Theorem \ref{dP_M_df_thm_opt_i}. The inequality in (\ref{last-equation-appendix-c}) suggests that 
we cannot have $\frac{d\mathfrak{L}}{df}|_{f=1,g=0}=0$ and $\frac{d\mathfrak{L}}{dg}|_{f=1,g=0}=0$ simultaneously. Hence $f^{*}_i=1, g^{*}_i=0$ cannot be the solution.  Since $(\mathcal{S}''_1)$ has a solution and $f^{*}_i=1, g^{*}_i=0$ is not a solution, we conclude that $0 \leq f^{*}_i<1$ and $0<g^{*}_i \leq 1$.


\bibliographystyle{IEEETran}
\normalem

\onecolumn
\begin{table}
	\tiny
	\vspace*{4pt}
	\caption{performance comparison when solving $(\mathcal{O})$:  SNR$_h=5$db, SNR$_c=10$db, $\beta=0.01$,  left table $\rho=0.5$ and right table $\rho=0.7$, $p_0$ varies.}
	\hspace{1cm}
	\begin{tabular}{||l|c|c|c|c|l||}
		\hline
		$$&$p_0$&$f^*$&$g^*$&$P_F$&$P_M$\\
		\hline
		pure censoring&$0.4$& $1$& $0$ & $0.01$&$0.1266$\\
		\hline
		CRT-II &$ $&$0.40$& $0.38$ & $0.01$&$0.1036$\\
		\hline
		CRT-I $f=1$ at FC  &$ $&$0.93$& $0.045$ & $0.01$&$0.1260$\\
		\hline
		CRT-I &$ $&$0.47$& $0.33$ & $0.01$&$0.1108$\\
		\hline
		\hline
		$$&$p_0$&$f^*$& $g^*$ & $P_F$&$P_M$\\
		\hline
		pure censoring&$0.6$&$1$& $0$ & $0.01$&$0.1097$\\
		\hline
		CRT-II &$$&$0.5$& $0.72$ & $0.01$&$0.0742$\\
		\hline
		CRT-I $f=1$  at FC &$$&$0.93$& $0.123$ & $0.01$&$0.1050$\\
		\hline
		CRT-I &$$&$0.58$& $0.6$ & $0.01$&$0.0846$\\
		\hline
		\hline
		$$&$p_0$&$f^*$& $g^*$ & $P_F$&$P_M$\\
		\hline
		pure censoring&$0.8$&$1$& $0$ & $0.01$&$0.0824$\\
		\hline
		CRT-II &$$&$0.75$& $1$ & $0.01$&$0.0644$\\
		\hline
		CRT-I $f=1$ at FC  &$ $&$1$ & $0$ &$0.01$&$0.0824$\\
		\hline
		CRT-I &$ $&$0.89$ & $0.39$ &$0.01$&$0.0800$\\
		\hline
		\end{tabular}
		\quad
		\begin{tabular}{||l|c|c|c|c|l||}
		\hline
		$$&$p_0$&$f^*$& $g^*$ & $P_F$&$P_M$\\
		\hline
		pure censoring&$0.4$&$1$& $0$ & $0.01$&$0.1500$\\
		\hline
		CRT-II &$ $&$0.38$& $0.39$ & $0.01$&$0.1250$\\
		\hline
		CRT-I $f=1$ at FC  &$ $&$0.93$& $0.049$ & $0.01$&$0.1390$\\
		\hline
		CRT-I &$ $&$0.47$& $0.32$ & $0.01$&$0.1270$\\
		\hline
		\hline
		$$&$p_0$&$f^*$& $g^*$ & $P_F$&$P_M$\\
		\hline
		pure censoring&$0.6$&$1$& $0$ & $0.01$&$0.1424$\\
		\hline
		CRT-II &$ $&$0.47$& $0.76$ & $0.01$&$0.1144$\\
		\hline
		CRT-I $f=1$ at FC  &$ $&$0.93$& $0.108$ & $0.01$&$0.1324$\\
		\hline
		CRT-I &$ $&$0.47$& $0.76$ & $0.01$&$0.1189$\\
		\hline
		\hline
		$$&$p_0$&$f^*$& $g^*$ & $P_F$&$P_M$\\
		\hline
		pure censoring&$0.8$&$1$& $0$ & $0.01$&$0.1132$\\
		\hline
		CRT-II &$$&$0.77$& $0.905$ & $0.01$&$0.0962$\\
		\hline
		CRT-I $f=1$ at FC  &$$&$0.93$& $0.264$ & $0.01$&$0.1100$\\
		\hline
		CRT-I &$$&$0.79$& $0.78$ & $0.01$&$0.1040$\\
		\hline
		\end{tabular}
		\label{table_problem_O_rho_07_w/channel}
		\end{table}

		
		
		\begin{table}
		\tiny
		\caption{performance comparison when solving $(\mathcal{O})$:  SNR$_h=10$db, SNR$_c=10$db, $\beta=0.01$, $\rho=0.5$, $p_0$ varies.}
		\hspace{1cm}
		\begin{tabular}{||l|c|c|c|c|l||}
		\hline
		$$&$p_0$&$f^*$&$g^*$& $P_F $&$P_M $\\
		\hline
		pure censoring&$0.4$&$1$& $ 0$ & $0.01$&$0.0593$\\
		\hline
		CRT-II &$ $&$0.45$& $ 0.35$ & $0.01$&$0.0530$\\
		\hline
		CRT-I $f=1$  at FC &$ $&$0.93$& $ 0.045$ & $0.01$&$0.0580$\\
		\hline
		CRT-I &$ $&$0.52$& $ 0.43$ & $0.01$&$0.0540$\\
		\hline
		\hline
		$$&$p_0$&$f^*$&$g^*$& $P_F $&$P_M $\\
		\hline
		pure censoring&$0.6$&$1$ & $ 0$  &$0.01$&$0.0548$\\
		\hline
		CRT-II &$ $&$0.87$& $ 0.20$ & $0.01$&$0.0490$\\
		\hline
		CRT-I $f=1$ at FC  &$ $&$0.93$& $ 0.085$ & $0.01$&$0.0530$\\
		\hline
		CRT-I &$ $&$0.87$& $ 0.20$ & $0.01$&$0.0498$\\
		\hline
		\hline
		$$&$p_0$&$f^*$& $g^*$& $P_F $&$P_M $\\
		\hline
		pure censoring&$0.8$&$1$& $ 0$ & $0.01$&$0.0426$\\
		\hline
		CRT-II &$ $&$0.87$& $ 0.51$ & $0.01$&$0.0420$\\
		\hline
		CRT-I  $f=1$  at FC &$ $&$1$& $ 0$ & $0.01$&$0.0426$\\
		\hline
		CRT-I &$ $&$0.92$& $ 0.32$ & $0.01$&$0.0422$\\
		\hline
	\end{tabular}
	\label{table_problem_O_rho_05_wo/channel}
\end{table}

\begin{table}
	\tiny
	\caption{performance comparison  when solving $(\mathcal{O})$. Left Table: SNR$_h=5$db, SNR$_c=10$db, $\beta=0.01$, $p_0=0.4$, $\rho$ varies. Right Table: SNR$_h=5$db, SNR$_c=12$db, $\beta=0.01$, special case of $\rho=0$, $p_0$ varies.}
	\hspace{1cm}
	\begin{tabular}{||l|c|c|c|c|l||}
		\hline
		$$&$\rho$&$f^*$& $g^*$ & $P_F$&$P_M$\\
		\hline
		pure censoring&$0.1$&$1$& $0$ & $0.01$&$0.0595$\\
		\hline
		CRT-II &$ $&$1$& $0$ & $0.01$&$0.0595$\\
		\hline
		CRT-I $f=1$ at FC &$ $&$1$& $0$ & $0.01$&$0.0595$\\
		\hline
		CRT-I &$ $&$1$&$0$ & $0.01$&$0.0595$\\
		\hline
		\hline
		$$&$\rho$&$f^*$& $g^*$ & $P_F$&$P_M$\\
		\hline
		pure censoring&$0.3$&$1$& $0$ & $0.01$&$0.0648$\\
		\hline
		CRT-II &$ $&$0.54$& $0.29$ & $0.01$&$0.0600$\\
		\hline
		CRT-I $f=1$  at FC &$ $&$1$& $0$ & $0.01$&$0.0648$\\
		\hline
		CRT-I &$ $&$0.58$& $0.26$ & $0.01$&$0.0610$\\
		\hline
		\hline
		$$&$\rho$&$f^*$& $g^*$ & $P_F$&$P_M$\\
		\hline
		pure censoring&$0.5$&$1$& $0$ & $0.01$&$0.1266$\\
		\hline
		CRT-II &$ $&$0.4$& $0.39$ & $0.01$&$0.1036$\\
		\hline
		CRT-I $f=1$ at FC  &$ $&$0.93$& $0.045$ & $0.01$&$0.1260$\\
		\hline
		CRT-I &$ $&$0.47$& $0.33$ & $0.01$&$0.1108$\\
		\hline
		\hline
		$$&$\rho$&$f^*$& $g^*$ & $P_F$&$P_M$\\
		\hline
		pure censoring&$0.7$&$1$& $0$ & $0.01$&$0.1500$\\
		\hline
		CRT-II &$ $&$0.38$& $0.39$ & $0.01$&$0.1250$\\
		\hline
		CRT-I $f=1$ at FC  &$ $&$0.93$& $0.049$ & $0.01$&$0.1390$\\
		\hline
		CRT-I &$ $&$0.47$& $0.32$ & $0.01$&$0.1270$\\
		\hline
		\hline
		$$&$\rho$&$f^*$& $g^*$ & $P_F$&$P_M$\\
		\hline
		pure censoring&$0.9$&$1$& $0$ & $0.01$&$0.1800$\\
		\hline
		CRT-II &$ $&$0.36$& $0.4$ & $0.01$&$0.1566$\\
		\hline
		CRT-I $f=1$ at FC &$ $&$0.8$& $0.148$ & $0.01$&$0.1770$\\
		\hline
		CRT-I &$ $&$0.45$& $0.34$ & $0.01$&$0.1700$\\
		\hline
		\end{tabular}
		\quad
		\begin{tabular}{||l|c|c|c|c|l||}
		\hline
		$$&$p_0$&$f^*$& $g^*$ & $P_F$&$P_M$\\
		\hline
		pure censoring&$0.4$&$1$& $0$ & $0.01$&$0.0109$\\
		\hline
		CRT-II &$$&$0.56$& $0.28$ & $0.01$&$0.0099$\\
		\hline
		CRT-I &$$&$1$& $0$ & $0.01$&$0.0109$\\
		\hline
		\hline
		$$&$p_0$&$f^*$& $g^*$ & $P_F$&$P_M$\\
		\hline
		pure censoring&$0.5$&$1$& $0$ & $0.01$&$0.0052$\\
		\hline
		CRT-II &$$&$0.89$& $0.30$ & $0.01$&$0.0048$\\
		\hline
		CRT-I &$$&$1$& $0$ & $0.01$&$0.0052$\\
		\hline
		\hline
		$$&$p_0$&$f^*$& $g^*$ & $P_F$&$P_M$\\
		\hline
		pure censoring&$0.8$&$1$& $0$ & $0.01$&$0.000585$\\
		\hline
		CRT-II &$$&$1$& $0$ & $0.01$&$0.000585$\\
		\hline
		CRT-I &$$&$1$& $0$ & $0.01$&$0.000585$\\
		\hline
	\end{tabular}
	\label{table_problem_O_rho_varies_w/channel}
\end{table}
\begin{table}
	\tiny
	\vspace*{4pt}
	\caption{Effect of correlation mismatch on pure censoring performance when solving $(\mathcal{O})$: SNR$_h=5$db, SNR$_c=10$db, $\beta=0.01$, $\rho=0.5$, $p_0$ varies.}
	\hspace{-1.5cm}
	\begin{tabular}{||l|c|c|c|c|c|c|c|c|c|c|c|l||}
		\hline
		$p_0$&$P_F|_{\rho=0}$&$P_M|_{\rho=0}$&$P_F|_{\rho=0.1}$&$P_M|_{\rho=0.1}$&$P_F|_{\rho=0.3}$&$P_M|_{\rho=0.3}$&$P_F|_{\rho=0.5}$&$P_M|_{\rho=0.5}$&$P_F|_{\rho=0.7}$&$P_M|_{\rho=0.7}$&$P_F|_{\rho=0.9}$&$P_M|_{\rho=0.9}$\\
		\hline
		$0.4$&$0.01$&$0.0315$&$0.0148$&$0.0527$&$0.151$&$0.0652$&$0.0192$&$0.0695$&$0.0232$&$0.0814$&$0.0262$&$0.0906$\\
		\hline
		$0.6$&$0.01$&$0.0172$&$0.0119$&$0.0193$&$0.0188$&$0.0318$&$0.0276$&$0.0476$&$0.0308$&$0.0557$&$0.0372$&$0.0697$\\
		\hline
		$0.8$&$0.01$&$0.0048$&$0.0251$&$0.0137$&$0.0465$&$0.0198$&$0.0688$&$0.0254$&$0.1018$&$0.0257$&$0.1539$&$0.0217$\\
		\hline
	\end{tabular}
	\label{table_correlation_mismatch}
\end{table}


\begin{table}
	\tiny
	\caption{performance comparison when solving $(\mathcal{S})$.  Left Table: SNR$_h=5$db, SNR$_c=10$db, $\beta=0.01$, $p_0=0.4$, $\rho$ varies.  Right Table: SNR$_h=5$db, SNR$_c=12$db, $\beta=0.01$, special case of $\rho=0$, $p_0$ varies.}
	\hspace{-1cm}
	\begin{tabular}{|l|c|c|c|c|c|c|l||}
		\hline
		\hline
		$$&$\rho$&$t$&$f^*$&$g^*$&$P_F$&$P_M$&$P_t$\\
		\hline
		pure censoring&$0.1$&$7$&$1$&$0$&$0.01$&$0.1$&$0.0196$\\
		\hline
		CRT-II&$$&$7$&$1$&$0$&$0.01$&$0.1$&$0.0196$\\
		\hline
		CRT-I at FC $f=1$ $(\mathcal{S}''_1)$&$$&$7$&$1$&$0$&$0.01$&$0.1$&$0.0196$\\
		\hline
		CRT-I&$$&$7$&$1$&$0$&$0.01$&$0.1$&$0.0196$\\
		\hline
		\hline
		$$&$\rho$&$t$&$f^*$&$g^*$&$P_F$&$P_M$&$P_t$\\
		\hline
		pure censoring&$0.3$&$7$&$1$&$0$&$0.01$&$0.1$&$0.0565$\\
		\hline
		CRT-II&$$&$7$&$0.8$&$0.0018$&$0.01$&$0.1$&$0.0534$\\
		\hline
		CRT-I at FC $f=1$ &$$&$6.5$&$0.98$&$10^{-4}$&$0.01$&$0.1$&$0.0550$\\
		\hline
		CRT-I&$$&$6.7$&$0.8$&$0.0033$&$0.01$&$0.1$&$0.0546$\\
		\hline
		\hline
		$$&$\rho$&$t$&$f^*$&$g^*$&$P_F$&$P_M$&$P_t$\\
		\hline
		pure censoring&$0.5$&$6$&$1$&$0$&$0.01$&$0.1$&$0.3328$\\
		\hline
		CRT-II&$$&$5.5$&$0.62$&$0.06$&$0.01$&$0.1$&$0.2533$\\
		\hline
		CRT-I $f=1$ at FC &$$&$5.5$&$0.94$&$0.01$&$0.01$&$0.1$&$0.3100$\\
		\hline
		CRT-I&$$&$5.7$&$0.76$&$0.050$&$0.01$&$0.1$&$0.2915$\\
		\hline
		\hline
		$$&$\rho$&$t$&$f^*$&$g^*$&$P_F$&$P_M$&$P_t$\\
		\hline
		pure censoring&$0.7$&$6$&$1$&$0$&$0.01$&$0.1$&$0.8193$\\
		\hline
		CRT-II&$$&$5.0$&$0.6$&$0.53$&$0.01$&$0.1$&$0.5890$\\
		\hline
		CRT-I $f=1$ at FC&$$&$6.0$&$0.95$&$10^{-3}$&$0.01$&$0.1$&$0.77$\\
		\hline
		CRT-I&$$&$4.0$&$0.68$&$0.25$&$0.01$&$0.1$&$0.61$\\
		\hline
		\hline
		$$&$\rho$&$t$&$f$&$g$&$P_F$&$P_M$&$P_t$\\
		\hline
		pure censoring&$0.9$&$6$&$1$&$0$&$0.01$&$0.1$&$0.8676$\\
		\hline
		CRT-II&$$&$5.5$&$0.6$&$0.8$&$0.01$&$0.1$&$0.6292$\\
		\hline
		CRT-I $f=1$ at FC&$$&$6$&$0.92$&$0.01$&$0.01$&$0.1$&$0.81$\\
		\hline
		CRT-I&$$&$4.0$&$0.68$&$0.34$&$0.01$&$0.1$&$0.65$\\
		\hline	
	\end{tabular}
	\quad
	\begin{tabular}{|l|c|c|c|c|c|l||}
		\hline
		$$&$t$&$f^*$&$g^*$&$P_F$&$P_M$&$P_t$\\
		\hline
		pure censoring&$2.2$&$1$&$0$&$0.01$&$0.025$&$0.2427$\\
		\hline
		CRT-II&$2.0$&$0.32$&$0.17$&$0.01$&$0.025$&$0.2106$\\
		\hline
		CRT-I &$2.2$&$1$&$0$&$0.01$&$0.025$&$0.2427$\\
		\hline
		\end{tabular}
		\label{table_problem_S_rho_varies_w/channel}
		\end{table}


\newpage

\begin{figure}[h]
\centering
\subfigure[]{
\includegraphics[scale=0.3]{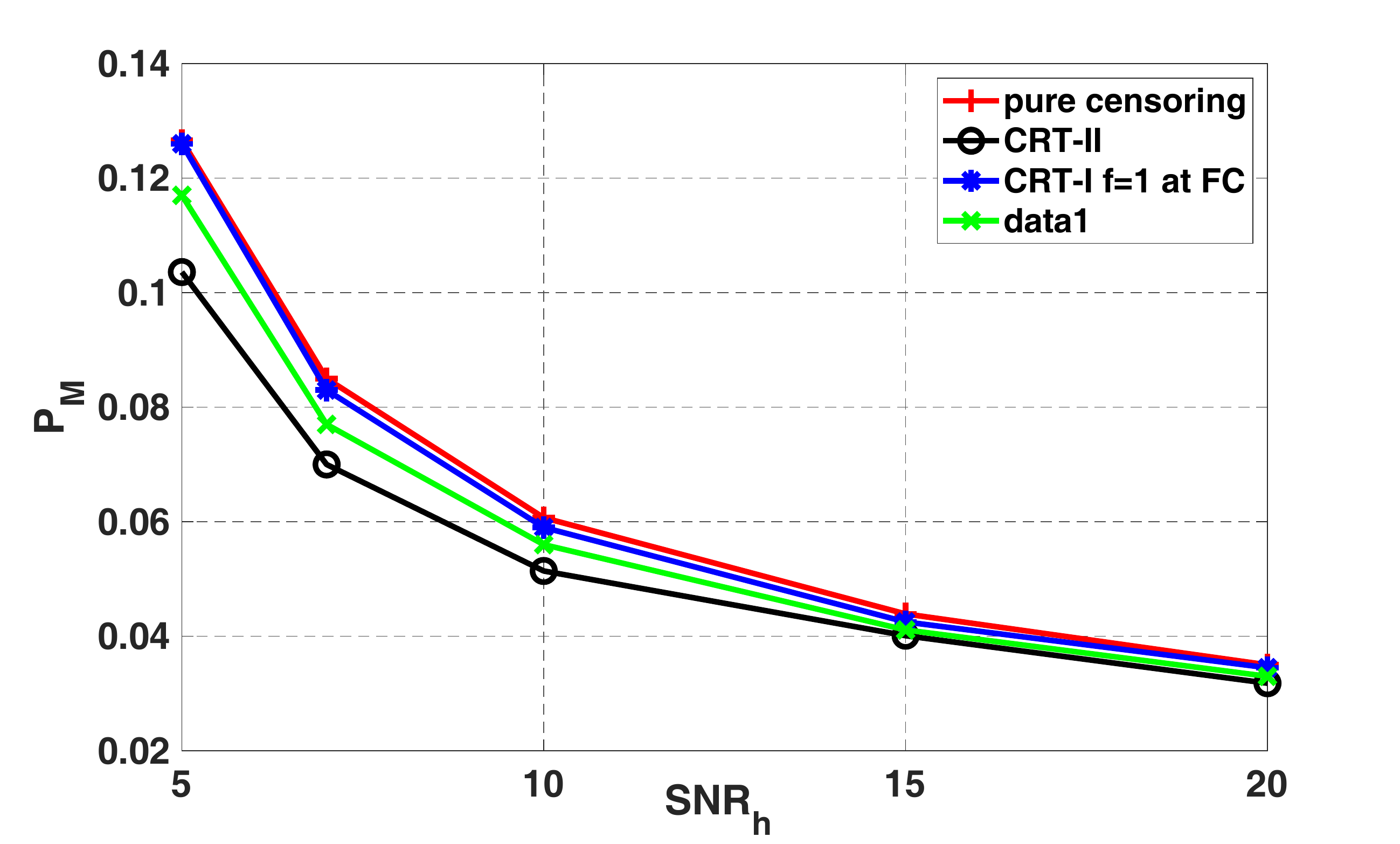}
\label{P_M-versus-channel-fig}
}
\subfigure[]{
\includegraphics[scale=0.3]{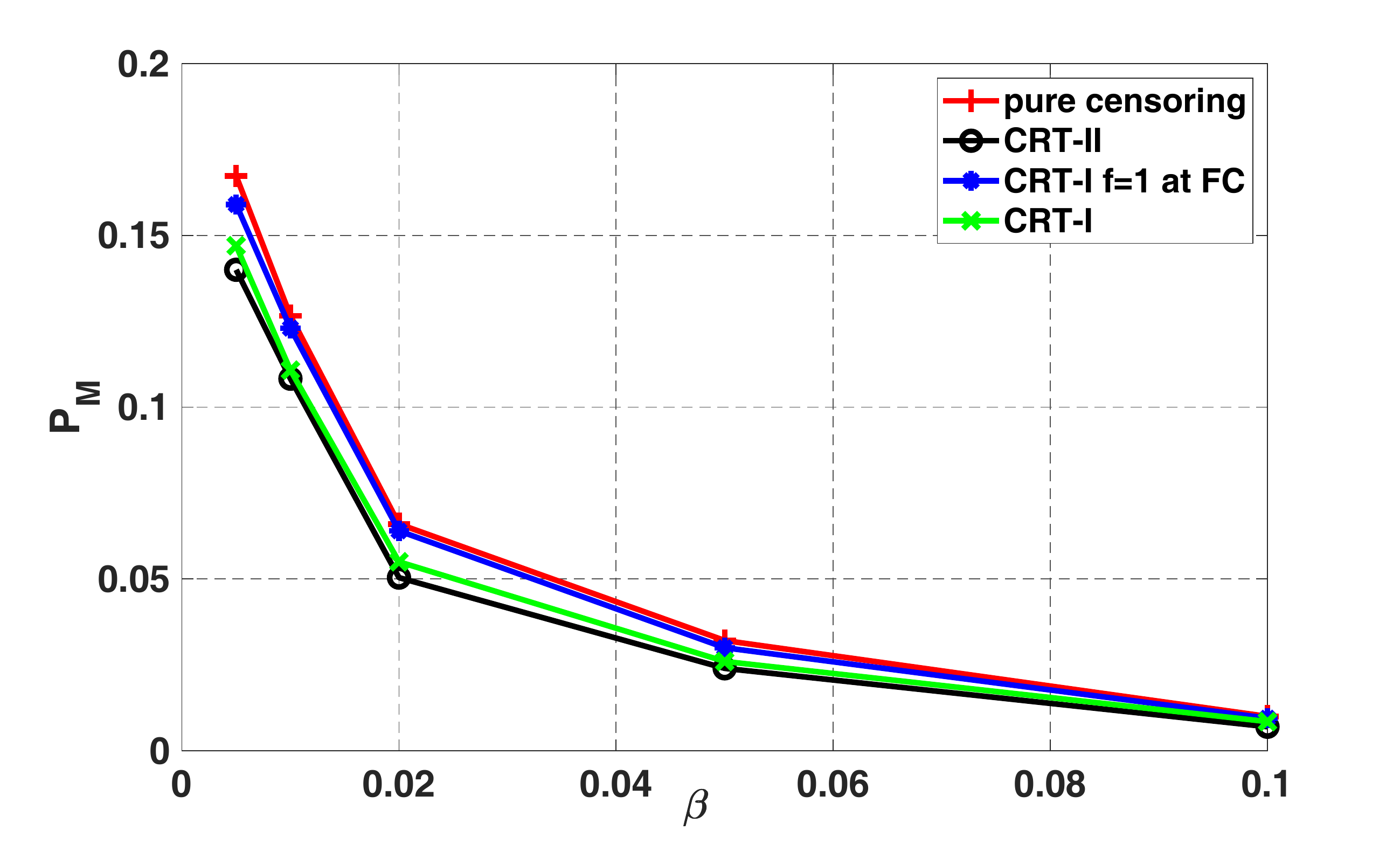}
\label{P_M-versus-P_f-fig}
}
\caption{   Solving $(\mathcal{O})$, SNR$_c=10$dB, $p_0=0.4$, $\rho=0.5$ (a) $P_M$ versus SNR$_h$ for   $\beta=0.01$. (b) $P_M$ versus $\beta$ for SNR$_h=10$dB.}
\label{}
\end{figure}

\begin{figure}[h!]
\centering
\subfigure[]{
	\includegraphics[scale=0.3]{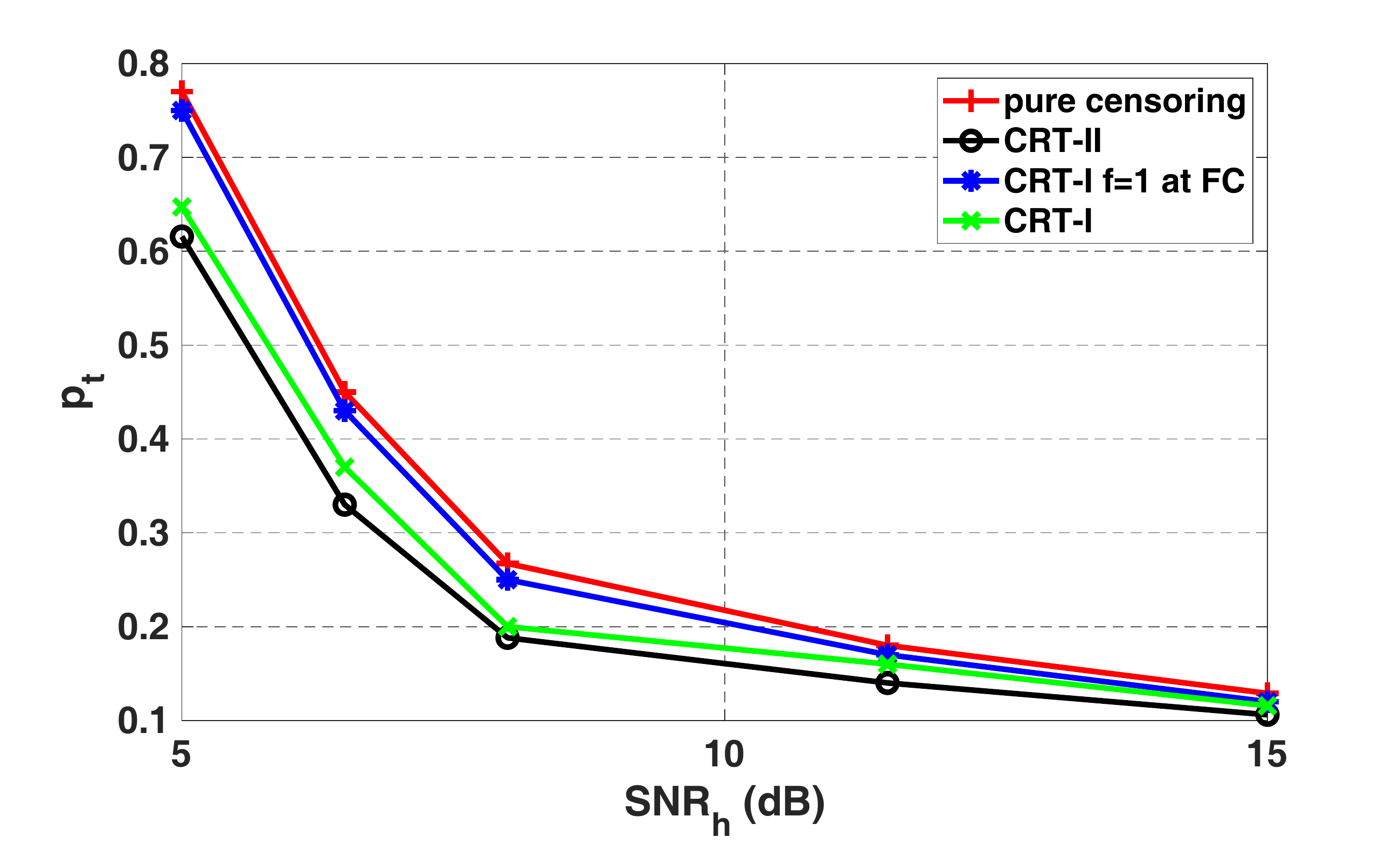}
	\label{p_t-versus-channel-fig}
}
\subfigure[]{
	\includegraphics[scale=0.3]{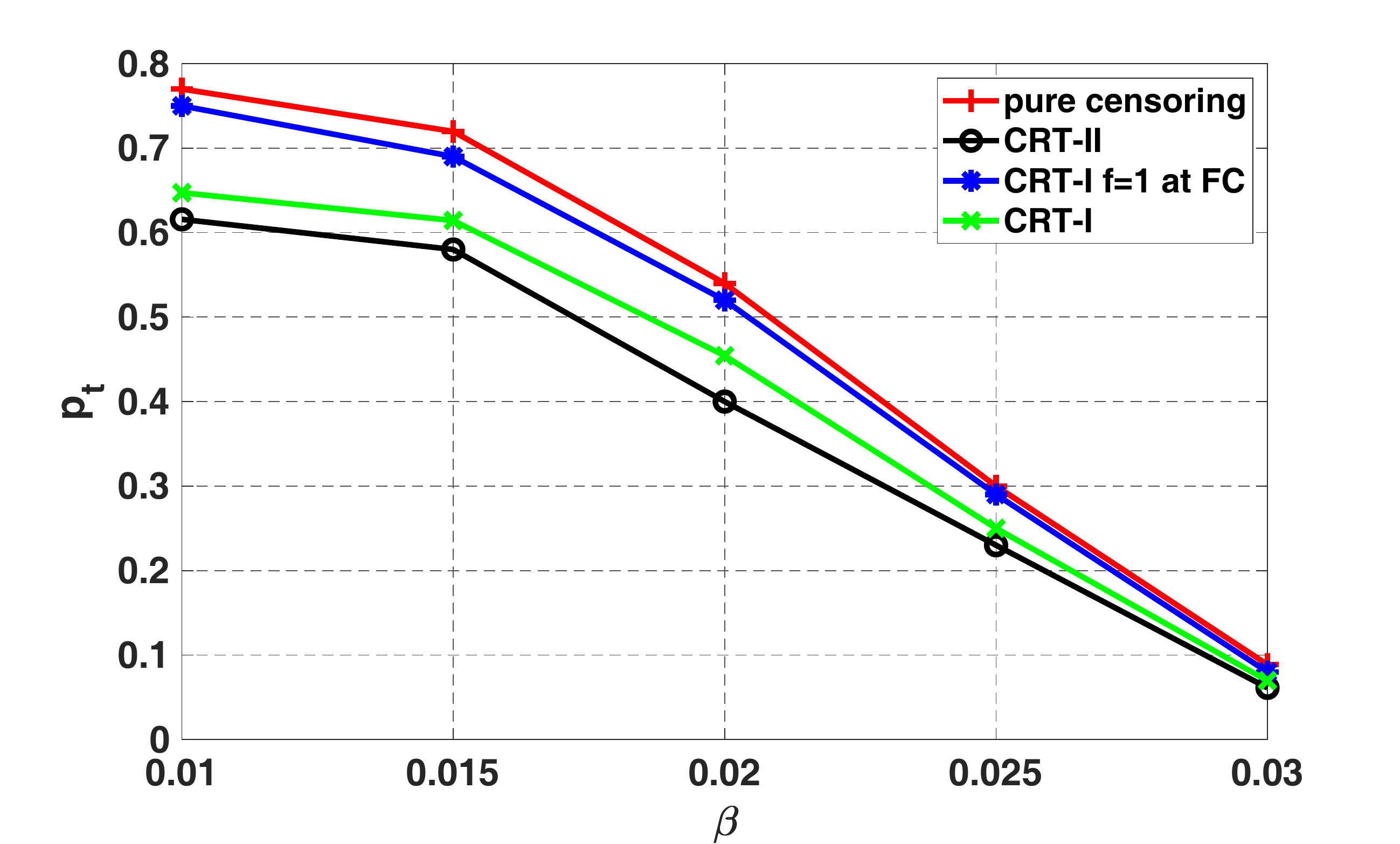}
	\label{p_t-versus-P_f-fig}
}
\caption{   Solving $(\mathcal{S})$, SNR$_c=10$dB, $\alpha=0.06$, $\rho=0.5$ (a) $P_t$ versus SNR$_h$ for $\beta=0.01$. (b) $P_t$ versus $\beta$ for SNR$_h=10$dB.}
\label{}
\end{figure}

\end{document}